\def\NAT@def@citea{\def\@citea{\NAT@separator}}% Suppress spaces between citations using natbib.sty
\theoremstyle{plain}% Theorem-like structures provided by amsthm.sty
\newtheorem{theorem}{Theorem}[section]
\newtheorem{lemma}[theorem]{Lemma}
\newtheorem{corollary}[theorem]{Corollary}
\numberwithin{equation}{section}
\theoremstyle{definition}
\newtheorem{definition}[theorem]{Definition}
\theoremstyle{remark}
\begin{document}

\articletype{RESEARCH ARTICLE}% Specify the article type or omit as appropriate

\title{Two generalizations of ideal matrices and their applications}

\author{
\name{Mingpei Zhang, Heng Guo{*} and Wenlin Huang \thanks{CONTACT{*} Heng Guo. Email: guoheng@ruc.edu.cn}}
\affil{School of Mathematics, Renmin University of China, Beijing, China}
}

\maketitle

\begin{abstract}
In this paper, two kinds of  generalizations of ideal matrices, generalized ideal matrices and double ideal matrices, are obtained and studied. The concepts of generalized ideal matrices and double ideal matrices are proposed, and their ranks and maximal linearly independent groups are verified.

The initial motivation to study double cyclic matrices is to study the quasi cyclic codes of the fractional index. In this paper, the generalized form of the quasi cyclic codes, i.e. the $\phi$-quasi cyclic codes, and the construction of the generated matrix are given by the double ideal matrix. 
\end{abstract}

\begin{keywords}
Ideal matrix ; eigenvector ; $\phi$-cyclic code ; coding theory ; generated matrix
\end{keywords}

MSC codes: 15A03

\section{Introduction}

Circulant matrices are a kind of matrices with good properties and special structure, and it is widely used in the industrial field and other branches of mathematics. Therefore, the research based on classical circulant matrices has also become an important part of matrix theory \cite{1}. And, many variants of circulant matrices have also emerged, such as g-circulant matrices \cite{1}, r-circulant matrices \cite{2}, RFMLR-circulant matrices \cite{3}, etc.

Among the many generalized forms of circulant matrices, ideal matrices \cite{4,5}, generalized circulant matrices and double circulant matrices \cite{6}, which are studied for industrial applications, have many valuable properties. The structure and many properties of the ideal matrix are similar to the circulant matrix, and the double circulant matrix is the union of the matrix formed by two vectors after several cyclic shifts. The properties of both make them widely used in coding theory, especially in the theory of cyclic codes. Ideal matrices are mainly used in the construction of ideal lattices and $\phi$-cyclic codes and in the improvement of the NTRU cryptosystem \cite{7}, double The cyclic matrix is used in the quasi cyclic code. Therefore, in this paper, inspired by the above content, we give the definition of the generalized ideal matrix and the double ideal matrix and study the properties of both, and propose $\phi$-quasi cyclic code.

In \cite{8}, the rank of the circulant matrix is calculated. Considering the application of the matrix in coding theory, it is necessary to evaluate the maximum linearly independent group of the matrix. In Section \ref{section2}, refer to The eigenvector \cite{9} of the circulant matrix, after obtaining the form of the generalized ideal matrix and the double ideal matrix, some conclusions related to them are obtained, such as rank, maximum linear independent group and some other properties. It also provides the basis for the proofs of some theorems in the next section.

Circular matrices play an important role in the study of cyclic codes and their generalization. A preliminary study of cyclic codes was presented in \cite{10}, and the structure of $\phi$-cyclic codes was described in \cite{4}. In Section \ref{section3}, a class of $\phi$-quasi cyclic codes is obtained based on the related properties of double ideal matrices. For the related study of the proposed cyclic codes, see \cite{11,12,13}.

\section{Ideal matrix and its generalizations}\label{section2}
\subsection{Ideal matrix}
Let the polynomial
\begin{equation}\label{equation2.1}
\phi(x)=x^n-\phi_{n-1}x^{n-1}-\cdots-\phi_{1}x-\phi_{0}\in\mathbb{Z}[x],\phi_{0}\neq 0.
\end{equation}
be a polynomial with no multiple roots over the complex field $\mathbb{C}$. And let $w_{1},w_{2},\cdots,w_{n}$ be $n$ distinct non-zero roots of $\phi(x)$. Then according to the parameters $\phi_{0},\phi_{1},\cdots,\phi_{n-1}$ given in (\ref{equation2.1}), the rotation matrix \cite{5} can be given by
\begin{equation*}
H=H_{\phi}=\begin{bmatrix}
    \begin{matrix}0&\cdots&0\ \end{matrix}
    &\phi_{0}\\
    \Large{I_{n-1}}&\begin{matrix}
    \phi_{1}\\ \vdots\\ \phi_{n-1}
    \end{matrix}
\end{bmatrix}_{n\times n}\in\mathbb{Z}^{n\times n}.
\end{equation*}
$I_{n-1}$ is an $n-1$-order identity matrix, and the characteristic polynomial of the matrix is $\phi(x)$. In particular, when $\phi_{0}=1$ and $\phi_{1}=\cdots=\phi_{n-1}=0$, $H_{\phi}$ is an $n$-order basic circulant matrix.

Next, the rotation matrix $H_{\phi}$ is applied to the column vector in $\mathbb{R}^{n}$. Let $f=\begin{bmatrix}f_{0}\\f_{1}\\ \vdots\\ f_{n-1}\end{bmatrix}\in\mathbb{R}^{n}$, the ideal matrix can be defined as 
\begin{equation}\label{equation2.2}
H^{*}(f)=[f,Hf,H^{2}f,\cdots,H^{n-1}f]_{n\times n}\in\mathbb{R}^{n\times n}.
\end{equation}

It is obvious that ideal matrix $H^{*}(f)$ is a generalization of circulant matrix. If $\phi(x)=x^{n}-1$, then $H^{*}(f)$ is the ordinary circulant matrix; If $\phi(x)=x^{n}-r$, then $H^{*}(f)$ is the $r$-circulant matrix. For the related properties of ideal matrix, please see Theorem 2 in \cite{4}.

\subsection{Generalized ideal matrix}
According to the ideal matrix given in (\ref{equation2.2}), a more general form of matrix can be given by
\begin{definition}\label{definition2.1}
For rotation matrix $H_{\phi}$ and $n$-dimensional real column vector $f$, the $n\times m$ matrix can be constructed by 
\begin{equation*}
H^{*}(f)_{n\times m}=[f,Hf,H^{2}f,\cdots,H^{m-1}f]_{n\times m}.
\end{equation*}
This matrix is called a $n\times m$ generalized ideal matrix.
\end{definition}

In particular, the matrix is a conventional ideal matrix when $m=n$, and a generalized circulant matrix when $H$ is a basic circulant matrix. For the properties of generalized circulant matrices, see Theorem 2.4 of \cite{6}. In this paper, we will give some lemmas that will be used to prove the properties of generalized ideal matrices.

First of all, for the rotation matrix $H_{\phi}$, 
\begin{equation*}
H_{\phi}^{T}=\begin{bmatrix}
    \begin{matrix}0\\ \vdots\\0 \end{matrix}
    &\Large{I_{n-1}}\\
    \phi_{0} &\begin{matrix}
    \phi_{1}& \cdots & \phi_{n-1}
    \end{matrix}
\end{bmatrix}_{n\times n},
\end{equation*}
the transpose has the same pairwise different eigenvalues $w_{1},w_{2},\cdots,w_{n}$. Therefore we have the following lemma.
\begin{lemma}\label{lemma2.2}
The eigenvalue $w_{i}$ of $H_{\phi}^{T}$ has an eigenvector $\begin{bmatrix}1\\w_{i}\\ \vdots\\ w_{i}^{n-1}\end{bmatrix}(1\leq i\leq n).$
\end{lemma}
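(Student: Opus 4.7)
The plan is to verify the claim by direct matrix-vector multiplication, exploiting the very explicit structure of $H_\phi^T$ displayed just above the lemma. Write $v_i = (1, w_i, w_i^2, \ldots, w_i^{n-1})^T$ and compute $H_\phi^T v_i$ row by row, then check that the result coincides with $w_i v_i$ componentwise.

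First I would examine the top $n-1$ rows of $H_\phi^T$. From the block form, row $k$ (for $1 \le k \le n-1$) consists of a single $1$ in position $k+1$ and zeros elsewhere, so its inner product with $v_i$ is exactly the $(k+1)$-th entry of $v_i$, namely $w_i^k$. On the other hand, the $k$-th entry of $w_i v_i$ is $w_i \cdot w_i^{k-1} = w_i^k$, giving agreement for those rows automatically.

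The only substantive step is the last row. That row is $(\phi_0, \phi_1, \ldots, \phi_{n-1})$, so its inner product with $v_i$ equals $\phi_0 + \phi_1 w_i + \cdots + \phi_{n-1} w_i^{n-1}$. I would then invoke the defining equation (\ref{equation2.1}): since $w_i$ is a root of $\phi(x) = x^n - \phi_{n-1}x^{n-1} - \cdots - \phi_0$, the above sum equals $w_i^n$, which is precisely the last entry of $w_i v_i$. This is the only place the root hypothesis on $w_i$ enters, and it is arguably the entire content of the lemma; everything else is bookkeeping.

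I do not anticipate a genuine obstacle here — the identification $v_i \mapsto w_i v_i$ is essentially the observation that the characteristic polynomial of a companion matrix has the monomial vector as a Vandermonde-type eigenvector on the transpose side. I would end by noting that the $n$ such vectors $v_1, \ldots, v_n$ are linearly independent (being columns of a Vandermonde matrix with distinct nodes $w_1, \ldots, w_n$), which will be useful for later arguments even if not strictly required by the lemma statement.
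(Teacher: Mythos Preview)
Your proposal is correct and follows essentially the same route as the paper: a direct computation of $H_\phi^T v_i$, with the first $n-1$ rows handled by the shift structure and the last row reduced to $w_i^n$ via $\phi(w_i)=0$. The only addition is your closing remark on the Vandermonde independence of the $v_i$, which the paper does not include here but relies on later.
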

\begin{proof}
The matrix calculation gives
$$
\begin{aligned}
H_{\phi}^{T}\begin{bmatrix}1\\w_{i}\\ \vdots\\ w_{i}^{n-1}\end{bmatrix}&=\begin{bmatrix}
    \begin{matrix}0\\ \vdots\\0 \end{matrix}
    &\Large{I_{n-1}}\\
    \phi_{0} &\begin{matrix}
    \phi_{1}& \cdots & \phi_{n-1}
    \end{matrix}
\end{bmatrix}\begin{bmatrix}1\\w_{i}\\ \vdots\\ w_{i}^{n-1}\end{bmatrix}\\
&=\begin{bmatrix}w_{i}\\w_{i}^{2}\\ \vdots\\ \phi_{0}+\phi_{1}w_{i}+\cdots+\phi_{n-1}w_{i}^{n-1}\end{bmatrix}.
\end{aligned}
$$
Since the characteristic polynomial of $H_{\phi}^{T}$ is
$$\phi(x)=x^n-\phi_{n-1}x^{n-1}-\cdots-\phi_{1}x-\phi_{0},$$
Therefore
$$\phi_{0}+\phi_{1}w_{i}+\cdots+\phi_{n-1}w_{i}^{n-1}=w_{i}^{n},$$
So there is
$$H_{\phi}^{T}\begin{bmatrix}1\\w_{i}\\ \vdots\\ w_{i}^{n-1}\end{bmatrix}=\begin{bmatrix}w_{i}\\w_{i}^{2}\\ \vdots\\ w_{i}^{n}\end{bmatrix}=w_{i}\begin{bmatrix}1\\w_{i}\\ \vdots\\ w_{i}^{n-1}\end{bmatrix}.$$
\end{proof}
Further conclusions can be drawn from Lemma \ref{lemma2.2}.
\begin{lemma}\label{lemma2.3}
For each root $w_{i}(1\leq i\leq n)$ of $\phi(x)$ and an $n$-dimensional real column vector $f$ and its corresponding polynomial $f(x)=\sum_{j=0}^{n-1}f_{j}x^{j}$, the following equation holds.
\begin{equation*}
[H^{*}(f)_{n\times m}]^{T}\begin{bmatrix}1\\w_{i}\\ \vdots\\ w_{i}^{n-1}\end{bmatrix}=f(w_{i})\begin{bmatrix}1\\w_{i}\\ \vdots\\ w_{i}^{m-1}\end{bmatrix}.
\end{equation*}
\end{lemma}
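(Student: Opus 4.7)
The plan is to reduce the statement to Lemma~\ref{lemma2.2} by a direct entry-by-entry computation of the product, exploiting the fact that the columns of $H^{*}(f)_{n\times m}$ are built from powers of $H_{\phi}$ applied to $f$.

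First I would rewrite the transpose row-by-row: since the $(k+1)$-th column of $H^{*}(f)_{n\times m}$ is $H_{\phi}^{k}f$ for $0\le k\le m-1$, the $(k+1)$-th row of $[H^{*}(f)_{n\times m}]^{T}$ is the $n$-dimensional row vector $(H_{\phi}^{k}f)^{T}=f^{T}(H_{\phi}^{T})^{k}$. Thus the $(k+1)$-th entry of the product $[H^{*}(f)_{n\times m}]^{T}v_{i}$, where $v_{i}=(1,w_{i},\dots,w_{i}^{n-1})^{T}$, equals $f^{T}(H_{\phi}^{T})^{k}v_{i}$.

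Next I would invoke Lemma~\ref{lemma2.2}. That lemma says $H_{\phi}^{T}v_{i}=w_{i}v_{i}$; iterating gives $(H_{\phi}^{T})^{k}v_{i}=w_{i}^{k}v_{i}$ for every $k\ge0$. Substituting, the $(k+1)$-th entry becomes $w_{i}^{k}\,f^{T}v_{i}$. A one-line computation then identifies $f^{T}v_{i}=\sum_{j=0}^{n-1}f_{j}w_{i}^{j}=f(w_{i})$, so the $(k+1)$-th entry is $w_{i}^{k}f(w_{i})$. Reassembling for $k=0,1,\dots,m-1$ yields exactly $f(w_{i})(1,w_{i},\dots,w_{i}^{m-1})^{T}$, which is the claim.

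I do not expect any serious obstacle: the entire content of the lemma is bookkeeping (transposes, indices, and recognizing the polynomial evaluation $\sum f_{j}w_{i}^{j}$), and the single nontrivial ingredient, $(H_{\phi}^{T})^{k}v_{i}=w_{i}^{k}v_{i}$, is an immediate iteration of Lemma~\ref{lemma2.2}. The only care needed is in pinning down which index ranges over rows of the transpose (giving the length-$m$ output vector) versus which ranges over the coordinates of $v_{i}$ (giving the length-$n$ sum that collapses to $f(w_{i})$); keeping those separate is what makes the output length $m$ rather than $n$.
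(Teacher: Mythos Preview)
Your proposal is correct and follows essentially the same route as the paper's proof: write the rows of $[H^{*}(f)_{n\times m}]^{T}$ as $f^{T}(H_{\phi}^{T})^{k}$, apply Lemma~\ref{lemma2.2} iteratively to obtain $(H_{\phi}^{T})^{k}v_{i}=w_{i}^{k}v_{i}$, and recognize $f^{T}v_{i}=f(w_{i})$. The only cosmetic difference is that the paper displays the computation as a single block-matrix manipulation rather than entry-by-entry.
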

\begin{proof}
$[H^{*}(f)_{n\times m}]^{T}$ can be written as a block matrix $\begin{bmatrix}f^{T}\\f^{T}H_{\phi}^{T}\\ \vdots\\ f^{T}(H_{\phi}^{m-1})^{T}\end{bmatrix}$, so we have the following equation.
$$
\begin{aligned}
[][H^{*}(f)_{n\times m}]^{T}\begin{bmatrix}1\\w_{i}\\ \vdots\\ w_{i}^{n-1}\end{bmatrix}&=\begin{bmatrix}f^{T}\\f^{T}H_{\phi}^{T}\\ \vdots\\ f^{T}(H_{\phi}^{m-1})^{T}\end{bmatrix}\begin{bmatrix}1\\w_{i}\\ \vdots\\ w_{i}^{n-1}\end{bmatrix}\\
&=\begin{bmatrix}f^{T}[1,w_{i},\cdots,w_{i}^{n-1}]^{T}\\f^{T}H_{\phi}^{T}[1,w_{i},\cdots,w_{i}^{n-1}]^{T}\\ \vdots\\ f^{T}(H_{\phi}^{m-1})^{T}[1,w_{i},\cdots,w_{i}^{n-1}]^{T}\end{bmatrix}.
\end{aligned}
$$
And then, from Lemma \ref{lemma2.2}, we can see
$$
\begin{aligned}
\begin{bmatrix}f^{T}[1,w_{i},\cdots,w_{i}^{n-1}]^{T}\\f^{T}H_{\phi}^{T}[1,w_{i},\cdots,w_{i}^{n-1}]^{T}\\ \vdots\\ f^{T}(H_{\phi}^{m-1})^{T}[1,w_{i},\cdots,w_{i}^{n-1}]^{T}\end{bmatrix}&=\begin{bmatrix}f^{T}[1,w_{i},\cdots,w_{i}^{n-1}]^{T}\\w_{i}f^{T}[1,w_{i},\cdots,w_{i}^{n-1}]^{T}\\ \vdots\\ w_{i}^{m-1}f^{T}[1,w_{i},\cdots,w_{i}^{n-1}]^{T}\end{bmatrix}\\
&=f^{T}\begin{bmatrix}1\\w_{i}\\ \vdots\\ w_{i}^{n-1}\end{bmatrix}\begin{bmatrix}1\\w_{i}\\ \vdots\\ w_{i}^{m-1}\end{bmatrix},
\end{aligned}
$$
and
$$f^{T}\begin{bmatrix}1\\w_{i}\\ \vdots\\ w_{i}^{n-1}\end{bmatrix}=f_{0}+f_{1}w_{i}+\cdots+f_{n-1}w_{i}^{n-1}=f(w_{i}),$$
so
$$[H^{*}(f)_{n\times m}]^{T}\begin{bmatrix}1\\w_{i}\\ \vdots\\ w_{i}^{n-1}\end{bmatrix}=f(w_{i})\begin{bmatrix}1\\w_{i}\\ \vdots\\ w_{i}^{m-1}\end{bmatrix}$$
holds.
\end{proof}
An $m\times n$ generalized Vandermonde matrix is then given by
\begin{equation*}
V_{\phi}^{m\times n}=\begin{bmatrix}
    1 & 1 & 1 & \dots & 1 \\
    w_{1} & w_{2} & w_{3} & \dots & w_{n} \\
    w_{1}^2 & w_{2}^2 & w_{3}^2 & \dots & w_{n}^2 \\
    \vdots & \vdots & \vdots & \ddots & \vdots \\
    w_{1}^{m-1} & w_{2}^{m-1} & w_{3}^{m-1} & \dots & w_{n}^{m-1}
\end{bmatrix}.
\end{equation*}
In particular, when $m=n$, it is a conventional $n\times n$ Vandermonde matrix, denoted by $V_{\phi}^{n}$. From this definition we obtain the following lemma.
\begin{lemma}\label{lemma2.4}
For any $n$-dimensional real vector $f$ and its corresponding polynomial  $f(x)=\sum_{j=0}^{n-1}f_{j}x^{j}$, the following equation
holds.
\begin{equation*}
[H^{*}(f)_{n\times m}]^{T}V_{\phi}^{n}=V_{\phi}^{m\times n}diag\{f(w_{1}),\cdots,f(w_{n})\}.
\end{equation*}
\end{lemma}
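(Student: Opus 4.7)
The plan is to reduce this identity to a column-by-column application of Lemma \ref{lemma2.3}. First I would observe that the $i$-th column of $V_{\phi}^{n}$ is exactly the vector $[1,w_{i},w_{i}^{2},\ldots,w_{i}^{n-1}]^{T}$ appearing on the left-hand side of Lemma \ref{lemma2.3}, and that the $i$-th column of $V_{\phi}^{m\times n}$ is the truncated vector $[1,w_{i},\ldots,w_{i}^{m-1}]^{T}$ appearing on the right-hand side.

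Next I would compute the product $[H^{*}(f)_{n\times m}]^{T}V_{\phi}^{n}$ column by column. By Lemma \ref{lemma2.3}, the $i$-th column of this product equals
\begin{equation*}
[H^{*}(f)_{n\times m}]^{T}\begin{bmatrix}1\\w_{i}\\ \vdots\\ w_{i}^{n-1}\end{bmatrix}=f(w_{i})\begin{bmatrix}1\\w_{i}\\ \vdots\\ w_{i}^{m-1}\end{bmatrix},
\end{equation*}
which is precisely $f(w_{i})$ times the $i$-th column of $V_{\phi}^{m\times n}$. Assembling these $n$ columns side by side, the product equals $V_{\phi}^{m\times n}$ with its $i$-th column scaled by $f(w_{i})$, and this scaling is exactly right multiplication by the diagonal matrix $\mathrm{diag}\{f(w_{1}),\ldots,f(w_{n})\}$.

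Honestly, there is no real obstacle here: once Lemma \ref{lemma2.3} is in hand the statement is immediate, and the only thing to be careful about is matching dimensions (the left Vandermonde factor is $n\times n$, while on the right one gets the $m\times n$ generalized Vandermonde, because the output vectors have length $m$ rather than $n$). I would state that bookkeeping explicitly so the reader sees why the left and right Vandermonde factors have different shapes, and that completes the proof.
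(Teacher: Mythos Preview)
Your proposal is correct and follows exactly the approach of the paper, which simply states that the result is obtained from Lemma \ref{lemma2.3}. In fact, you have spelled out the column-by-column application and the dimension bookkeeping that the paper leaves implicit.
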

\begin{proof}
The proof can be obtained by Lemma \ref{lemma2.3}.
\end{proof}
Next we give the theorem of this subsection.
\begin{theorem}\label{theorem2.5}
Let $d(x)$ be $gcd(f(x),\phi(x))$, $d$ be $degd(x)$ and let $r=min\{m,$
$n-d\}$, Then:

(i)the rank of $H^{*}(f)_{n\times m}$ is $r$;

(ii)the first $r$ columns of $H^{*}(f)_{n\times m}$ are linearly independent.
\end{theorem}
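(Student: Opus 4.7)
The plan is to extract rank information from Lemma~\ref{lemma2.4} and then handle the linear-independence claim separately via a polynomial-division argument.

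First I would use Lemma~\ref{lemma2.4} together with the fact that $V_{\phi}^{n}$ is invertible (since $w_1,\dots,w_n$ are pairwise distinct) to write
\begin{equation*}
[H^{*}(f)_{n\times m}]^{T} = V_{\phi}^{m\times n}\,\mathrm{diag}\{f(w_1),\dots,f(w_n)\}\,(V_{\phi}^{n})^{-1}.
\end{equation*}
Since rank is preserved by multiplication with an invertible matrix and by transposition, the rank of $H^{*}(f)_{n\times m}$ equals the rank of $V_{\phi}^{m\times n}\,\mathrm{diag}\{f(w_1),\dots,f(w_n)\}$. Because $\phi(x)$ has simple roots, $f(w_i)=0$ iff $(x-w_i)\mid d(x)$, so exactly $d$ of the values $f(w_i)$ vanish and $n-d$ are nonzero. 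Multiplying on the right by this diagonal matrix kills the columns indexed by roots of $d(x)$ and scales the remaining $n-d$ columns of $V_{\phi}^{m\times n}$ by nonzero factors. Thus the rank equals the rank of an $m\times(n-d)$ Vandermonde-type matrix built from $n-d$ distinct nonzero nodes, which is $\min\{m,n-d\}=r$ (the square sub-Vandermonde in the smaller dimension is invertible).

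For part (ii), I would use the standard identification of $\mathbb{R}^n$ with $\mathbb{R}[x]/(\phi(x))$ via $f\leftrightarrow f(x)=\sum_{j=0}^{n-1}f_j x^j$, under which the action of $H_{\phi}$ corresponds to multiplication by $x$ modulo $\phi(x)$; this is immediate from the companion-matrix shape of $H_{\phi}$. Then a linear relation $\sum_{k=0}^{r-1}c_k H^{k}f=0$ translates to $c(x)f(x)\equiv 0\pmod{\phi(x)}$, where $c(x)=\sum_{k=0}^{r-1}c_k x^{k}$.

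Writing $\phi(x)=d(x)\phi_1(x)$ and $f(x)=d(x)f_1(x)$ with $\gcd(\phi_1,f_1)=1$, the congruence $\phi(x)\mid c(x)f(x)$ reduces to $\phi_1(x)\mid c(x)f_1(x)$, and coprimality forces $\phi_1(x)\mid c(x)$. Since $\deg\phi_1 = n-d$ and $\deg c\le r-1\le (n-d)-1 < \deg\phi_1$, we must have $c(x)=0$, so $c_0=\cdots=c_{r-1}=0$, proving the first $r$ columns are linearly independent.

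The main obstacle is really just the bookkeeping in the first paragraph — namely, justifying cleanly that after killing $d$ columns the remaining columns of $V_{\phi}^{m\times n}$ span a space of dimension $\min\{m,n-d\}$; once the dictionary $H\leftrightarrow x$ is in place, part (ii) is essentially a one-line divisibility argument. No step requires much computation beyond what Lemmas~\ref{lemma2.2}--\ref{lemma2.4} already supply.
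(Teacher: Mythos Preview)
Your argument is correct. For part~(i) you follow essentially the paper's route: use Lemma~\ref{lemma2.4} to pass to $V_\phi^{m\times n}\,\mathrm{diag}\{f(w_i)\}$, count the nonzero diagonal entries as $n-d$, and read off the rank as $\min\{m,n-d\}$ from the Vandermonde structure.

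For part~(ii), however, you take a genuinely different path. The paper stays inside the Vandermonde framework: it locates a nonzero $r\times r$ minor in the first $r$ rows of $V_\phi^{m\times n}\,\mathrm{diag}\{f(w_i)\}$ (any $r$ of the surviving columns give a square Vandermonde in distinct nodes), and then transports row independence back to $[H^{*}(f)_{n\times m}]^{T}$ through the invertible right factor $V_\phi^{n}$. You instead invoke the companion-matrix identification $H\leftrightarrow$ multiplication by $x$ in $\mathbb{R}[x]/(\phi)$, turn a putative dependence among the first $r$ columns into a congruence $c(x)f(x)\equiv 0\pmod{\phi(x)}$, and kill $c(x)$ by a degree/divisibility count. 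Your route is more algebraic and a bit shorter; notably it does not use the simple-roots hypothesis on $\phi$ for part~(ii), whereas the paper's minor argument relies on the $w_i$ being distinct. The paper's approach, on the other hand, is uniform --- both (i) and (ii) are read off the single identity of Lemma~\ref{lemma2.4} --- and it carries over verbatim to the double-ideal setting of Theorem~\ref{theorem2.11}, where your quotient-ring picture would have to be replaced by a product $\mathbb{R}[x]/(\phi_1)\times\mathbb{R}[x]/(\phi_2)$ and the divisibility bookkeeping becomes less immediate.
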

\begin{proof}
Let all roots of $d(x)$ be $w_{j_{1}},\cdots,w_{j_{d}}$. These are also roots of $\phi(x)$, and it is known from the premise that the rest of the roots of $\phi(x)$ are not roots of $f(x)$. Here we denote this part of the roots as $w_{j_{d+1}},\cdots,w_{j_{n}}$. Since the $i$-th column of $V_{\phi}^{m\times n}diag\{f(w_{1}),\cdots,f(w_{n})\}$ can be expressed as
$$f(w_{i})\begin{bmatrix}1 \\ w_{i} \\ \vdots \\ w_{i}^{m-1}\end{bmatrix}(1\leq i\leq n),$$
it can be seen that there are $n-d$ non-zero columns.
$$f(w_{j})\begin{bmatrix}1 \\ w_{j_{t}} \\ \vdots \\ w_{j_{t}}^{m-1}\end{bmatrix}(d+1\leq t\leq n).$$

If $m\geq n-d$, the rank of $V_{\phi}^{m\times n}diag\{f(w_{1}),\cdots,f(w_{n})\}$ is $n-d$; 
 
 If $m<n-d$, the rank of $V_{\phi}^{m\times n}diag\{f(w_{1}),\cdots,f(w_{n})\}$ is $m$.

 Then the rank of the matrix can be written as $r$. By Lemma \ref{lemma2.4}, we can get that the rank of  $[H^{*}(f)_{n\times m}]^{T}V_{\phi}^{n}$ is also $r$. Since $V_{\phi}^{n}$ is an invertible matrix, it is easy to know that the rank of $[H^{*}(f)_{n\times m}]^{T}$ is $r$, i.e. the rank of  $H^{*}(f)_{n\times m}$ is $r$. 

Since a $r$-order non-zero minor determinant can be found in the first $r$ rows of  the matrix $V_{\phi}^{m\times n}diag\{f(w_{1}),\cdots,f(w_{n})\}$, the first $r$ rows of this matrix are linearly independent. By Lemma \ref{lemma2.4}, the first $r$ rows of $[H^{*}(f)_{n\times m}]^{T}V_{\phi}^{n}$ are linearly independent. Since $V_{\phi}^{n}$ is an invertible matrix and right multiplication of an invertible matrix does not change the linear independence of the rows, the first $r$ rows of $[H^{*}(f)_{n\times m}]^{T}$ are linearly independent, i.e. the first $r$ columns of $H^{*}(f)_{n\times m}$ are linearly independent.
\end{proof}
\begin{corollary}\label{corollary2.6}
$H^{*}(Hf)_{n\times m}$ also has the two properties described in Theorem \ref{theorem2.5}.
\end{corollary}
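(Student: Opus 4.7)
The plan is to reduce Corollary \ref{corollary2.6} to Theorem \ref{theorem2.5} by viewing $H^{*}(Hf)_{n\times m}=[Hf,H^{2}f,\cdots,H^{m}f]$ as itself being a generalized ideal matrix, namely the one associated with the vector $Hf$ in place of $f$. Once this observation is made, Theorem \ref{theorem2.5} applies verbatim, and the only thing to verify is that the relevant gcd invariant is unchanged so that the same number $r=\min\{m,n-d\}$ governs the conclusion.

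First I would identify the polynomial attached to $Hf$. Under the correspondence between column vectors in $\mathbb{R}^{n}$ and elements of $\mathbb{R}[x]/(\phi(x))$ that sends $f$ to $f(x)=\sum_{j=0}^{n-1}f_{j}x^{j}$, left multiplication by $H_{\phi}$ corresponds to multiplication by $x$ modulo $\phi(x)$; this is immediate from the block form of $H_{\phi}$, since the first $n-1$ standard basis vectors are shifted up by one position and $e_{n}$ is sent to the coefficient vector of $x^{n}\bmod\phi(x)$. Hence $Hf$ corresponds to the polynomial $g(x)=xf(x)\bmod\phi(x)$, which has degree at most $n-1$ and therefore fits the framework of Definition \ref{definition2.1} and Theorem \ref{theorem2.5}.

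Next I would check that $\deg\gcd(g(x),\phi(x))=d$. Because $\phi_{0}\neq 0$ we have $\phi(0)\neq 0$, so $\gcd(x,\phi(x))=1$; consequently any common divisor of $xf(x)$ and $\phi(x)$ is coprime to $x$ and therefore divides $f(x)$, which yields $\gcd(xf(x),\phi(x))=\gcd(f(x),\phi(x))=d(x)$. Reducing $xf(x)$ modulo $\phi(x)$ does not alter this gcd, so $\gcd(g(x),\phi(x))=d(x)$, still of degree $d$. Therefore the number $r=\min\{m,n-d\}$ computed from $Hf$ matches the one computed from $f$, and applying Theorem \ref{theorem2.5} to $H^{*}(Hf)_{n\times m}$ gives assertions (i) and (ii) for this matrix.

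The only step carrying any real content is the gcd identity $\gcd(xf,\phi)=\gcd(f,\phi)$, and that rests entirely on the standing hypothesis $\phi_{0}\neq 0$ built into $(\ref{equation2.1})$. No further work with the Vandermonde structure $V_{\phi}^{n}$ is required, since that machinery is already absorbed into Theorem \ref{theorem2.5}.
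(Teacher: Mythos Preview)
Your argument is correct, but it takes a different route from the paper. The paper's proof is a one-line factorization: since
\[
H^{*}(Hf)_{n\times m}=[Hf,H^{2}f,\dots,H^{m}f]=H\cdot H^{*}(f)_{n\times m}
\]
and $H$ is invertible (its determinant is $\pm\phi_{0}\neq 0$), left multiplication by $H$ is a sequence of elementary row operations that preserves both the rank and every linear relation among the columns; properties (i) and (ii) then transfer directly from $H^{*}(f)_{n\times m}$. You instead view $H^{*}(Hf)_{n\times m}$ as the generalized ideal matrix attached to the new vector $Hf$, identify the associated polynomial as $xf(x)\bmod\phi(x)$, and verify that $\gcd(xf,\phi)=\gcd(f,\phi)$ via $\phi_{0}\neq 0$, so that Theorem~\ref{theorem2.5} applies with the same $d$ and $r$. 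The paper's approach is shorter and purely linear-algebraic; your approach is slightly longer but has the merit of making explicit that $H^{*}(Hf)$ is genuinely another instance of the theorem with identical invariants, a viewpoint that also underlies Corollary~\ref{corollary2.7}.
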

\begin{proof}
The rank of $H^{*}(f)_{n\times m}$ is $r$ and $H$ is an invertible matrix, so the rank of
\begin{equation*}
H^{*}(Hf)_{n\times m}=[Hf,H^{2}f,\cdots,H^{m}f]_{n\times m}=H\cdot H^{*}(f)_{n\times m}
\end{equation*}
is also $r$; $H\cdot H^{*}(f)_{n\times m}$ is equivalent to a series of primitive row transformations of $H^{*}(f)_{n\times m}$. Therefore, it does not affect the linear independence between the columns, so that the first $r$ columns of $H^{*}(Hf)_{n\times m}$ are also linearly independent.
\end{proof}
\begin{corollary}\label{corollary2.7}
Any $r$ consecutive columns of $H^{*}(f)_{n\times m}$ are linearly independent.
\end{corollary}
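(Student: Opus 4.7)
The plan is to reduce to the ``first $r$ columns'' case already handled in Theorem \ref{theorem2.5} via a factorization by a power of $H$. Concretely, fix an index $k$ with $0 \le k \le m-r$ and consider the block of $r$ consecutive columns
\begin{equation*}
[H^{k}f,\; H^{k+1}f,\; \cdots,\; H^{k+r-1}f] \;=\; H^{k}\cdot[f,\; Hf,\; \cdots,\; H^{r-1}f] \;=\; H^{k}\cdot H^{*}(f)_{n\times r}.
\end{equation*}

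First I would note that $H$ is invertible: expanding $\det H$ along its first row yields $\det H = (-1)^{n+1}\phi_{0}$, which is nonzero by the standing assumption $\phi_{0}\neq 0$ in (\ref{equation2.1}). Hence $H^{k}$ is invertible as well, and left-multiplication by it preserves the linear (in)dependence of any set of columns.

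Next I would apply Theorem \ref{theorem2.5} to the generalized ideal matrix $H^{*}(f)_{n\times r}$ with width parameter $m' := r$. Its rank is $\min\{r,\, n-d\}$, and since $r = \min\{m,\, n-d\}\le n-d$ by definition, this minimum equals $r$. Theorem \ref{theorem2.5}(ii) then tells us the first $r$ columns of $H^{*}(f)_{n\times r}$, which are \emph{all} of its columns, are linearly independent.

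Combining the two steps, the $r$ columns of $H^{k}\cdot H^{*}(f)_{n\times r}$ are linearly independent, which is exactly the claim for the $r$ consecutive columns starting at position $k$. Since $k$ was arbitrary in the allowed range, the corollary follows. There is no real obstacle here; the only slightly delicate point is checking that $r\le n-d$ so that Theorem \ref{theorem2.5} gives full column rank rather than something smaller, but this is immediate from the definition $r=\min\{m,n-d\}$.
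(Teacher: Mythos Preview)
Your proof is correct and follows essentially the same approach as the paper: both arguments factor the block of consecutive columns as a power of $H$ times an initial block, use the invertibility of $H$ (which the paper packages as Corollary~\ref{corollary2.6}), and then appeal to Theorem~\ref{theorem2.5} for the independence of the first $r$ columns. The only cosmetic difference is that the paper views the block as the first $r$ columns of $H^{*}(H^{d}f)_{n\times m}$ and iterates Corollary~\ref{corollary2.6}, whereas you factor out $H^{k}$ directly and apply Theorem~\ref{theorem2.5} to $H^{*}(f)_{n\times r}$.
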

\begin{proof}
 The $d+1$-st to $d+r$-st columns of $H^{*}(f)_{n\times m}$, i.e. $$H^{d}f,\cdots,H^{d+r-1}f(0\leq d\leq m-r),$$ are in fact the first $r$ columns of $H^{*}(H^{d}f)_{n\times m}$. It follows step by step from Corollary \ref{corollary2.6} that $H^{*}(H^{d}f)_{n\times m}$ also has the two properties described in Theorem \ref{theorem2.5}, so that $H^{d}f,\cdots,H^{d+r-1}f$ are linearly independent.
\end{proof}

\subsection{Double ideal matrix}
\begin{definition}\label{definition2.8}
For $n_{1}\times n_{1}$ rotation matrix $H_{\phi_{1}}$, $n_{2}\times n_{2}$ rotation matrix $H_{\phi_{2}}$ $n_{1}$-dimensional real column vector $f_{1}$ and $n_{2}$-dimensional real column vector $f_{2}$, the $(n_{1}+n_{2})\times m$ matrix can be constructed by
\begin{equation*}
\begin{aligned}
H_{\phi_{1},\phi_{2}}^{*}(f_{1},f_{2})_{(n_{1}+n_{2})\times m}&=\begin{bmatrix}H_{\phi_{1}}^{*}(f_{1})_{n_{1}\times m}\\H_{\phi_{2}}^{*}(f_{2})_{n_{2}\times m} \end{bmatrix} \\ &=\begin{bmatrix}f_{1}&H_{\phi_{1}}f_{1}&H_{\phi_{1}}^{2}f_{1}&\cdots&H_{\phi_{1}}^{m-1}f_{1}\\
f_{2}&H_{\phi_{2}}f_{2}&H_{\phi_{2}}^{2}f_{2}&\cdots&H_{\phi_{2}}^{m-1}f_{2}
\end{bmatrix}_{(n_{1}+n_{2})\times m}.
\end{aligned}
\end{equation*}
This matrix is called a $(n_{1}+n_{2})\times m$ generalized ideal matrix.
\end{definition}
In particular, the matrix is a double circulant matrix when $H_{\phi_{1}}$ and $H_{\phi_{2}}$ are both basic circulant matrices.trix. For the properties of double circulant matrices, see 
Theorem 3.6 of \cite{6}. Next, by comparison with Lemma \ref{lemma2.3} and Lemma \ref{lemma2.4}, two lemmas relating to double ideal matrices can be written.
\begin{lemma}\label{lemma2.9}
For each root $w_{i}(1\leq i\leq n_{1})$ of $\phi_{1}(x)$, each root $v_{j}(1\leq j\leq n_{2})$ of $\phi_{2}(x)$, an $n_{1}$-dimensional real column vector $f_{1}$, an $n_{2}$-dimensional real column vector $f_{2}$ and their corresponding polynomials $f_{1}(x)=\sum_{j=0}^{n_{1}-1}f_{1j}x^{j}$, 
$f_{2}(x)=\sum_{j=0}^{n_{2}-1}f_{2j}x^{j}$, the following equation holds.
\begin{equation*}
\begin{aligned}
& [H_{\phi_{1},\phi_{2}}^{*}(f_{1},f_{2})_{(n_{1}+n_{2})\times m}]^{T}[1,w_{i},\cdots,w_{i}^{n_{1}-1},0, \cdots,0]^{T} \\
&=f_{1}(w_{i})[1,w_{i},\cdots,w_{i}^{n_{1}-1},0, \cdots,0]^{T};
\end{aligned}
\end{equation*}
\begin{equation*}
\begin{aligned}
& [H_{\phi_{1},\phi_{2}}^{*}(f_{1},f_{2})_{(n_{1}+n_{2})\times m}]^{T}[0,\cdots,0,1,v_{j}, \cdots,v_{j}^{n_{2}-1}]^{T} \\
&=f_{2}(v_{j})[0,\cdots,0,1,v_{j}, \cdots,v_{j}^{n_{2}-1}]^{T}.
\end{aligned}
\end{equation*}
\end{lemma}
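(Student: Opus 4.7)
The approach is to exploit the block-row structure of the double ideal matrix. By Definition~\ref{definition2.8}, the matrix $H_{\phi_{1},\phi_{2}}^{*}(f_{1},f_{2})_{(n_{1}+n_{2})\times m}$ is obtained by vertically stacking $H_{\phi_{1}}^{*}(f_{1})_{n_{1}\times m}$ on top of $H_{\phi_{2}}^{*}(f_{2})_{n_{2}\times m}$, so its transpose is the horizontal concatenation
\[
[H_{\phi_{1},\phi_{2}}^{*}(f_{1},f_{2})_{(n_{1}+n_{2})\times m}]^{T} = \bigl[\,[H_{\phi_{1}}^{*}(f_{1})_{n_{1}\times m}]^{T} \ \ [H_{\phi_{2}}^{*}(f_{2})_{n_{2}\times m}]^{T}\,\bigr].
\]
Partitioning any test vector conformably as $(u_{1};u_{2})$ with $u_{1}\in\mathbb{R}^{n_{1}}$ and $u_{2}\in\mathbb{R}^{n_{2}}$, the product decomposes as $[H_{\phi_{1}}^{*}(f_{1})_{n_{1}\times m}]^{T}u_{1}+[H_{\phi_{2}}^{*}(f_{2})_{n_{2}\times m}]^{T}u_{2}$, so a vector supported on a single block picks out only the corresponding single-block action.

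For the first identity I would specialize to $u_{1}=[1,w_{i},\ldots,w_{i}^{n_{1}-1}]^{T}$ and $u_{2}=0$; the second summand then vanishes, and the first is precisely the situation covered by Lemma~\ref{lemma2.3} applied to $H_{\phi_{1}}^{*}(f_{1})_{n_{1}\times m}$ at the root $w_{i}$ of $\phi_{1}$, yielding $f_{1}(w_{i})[1,w_{i},\ldots,w_{i}^{m-1}]^{T}$. The second identity is entirely symmetric: take $u_{1}=0$ and $u_{2}=[1,v_{j},\ldots,v_{j}^{n_{2}-1}]^{T}$, which kills the first summand, and invoke Lemma~\ref{lemma2.3} with $\phi_{2}$, $f_{2}$, and $v_{j}$ to produce $f_{2}(v_{j})[1,v_{j},\ldots,v_{j}^{m-1}]^{T}$.

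I do not expect any real obstacle here: the whole content of the lemma is that, once the double ideal matrix is written in its natural block form, the transpose's action on a vector supported on a single block decouples into a single-block computation, and each such computation is exactly Lemma~\ref{lemma2.3}. The only care required is to pass the transpose through the block decomposition correctly and to match the output dimension~$m$ consistently with the $n\times m$ convention of Lemma~\ref{lemma2.3}, after which the two identities follow by direct substitution.
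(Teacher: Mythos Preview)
Your approach is correct and is exactly what the paper intends: it gives no proof of this lemma, merely the remark that it follows ``by comparison with Lemma~\ref{lemma2.3}'', which is precisely the block-decomposition reduction you carry out. One remark: your computation correctly produces the $m$-dimensional outputs $f_{1}(w_{i})[1,w_{i},\ldots,w_{i}^{m-1}]^{T}$ and $f_{2}(v_{j})[1,v_{j},\ldots,v_{j}^{m-1}]^{T}$, and this---rather than the right-hand sides as printed in the lemma statement---is the form actually used in Lemma~\ref{lemma2.10}, so the discrepancy is a typo in the paper's statement, not a gap in your argument.
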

Construct a block matrix
\begin{equation*}
V_{\phi_{1},\phi_{2}}^{m\times (n_{1}+n_{2})}=[V_{\phi_{1}}^{m\times n_{1}},V_{\phi_{2}}^{m\times n_{2}}].
\end{equation*}
to get the next Lemma.
\begin{lemma}\label{lemma2.10}
For an $n_{1}$-dimensional real column vector $f_{1}$, an $n_{2}$-dimensional real column vector $f_{2}$ and their corresponding polynomials $f_{1}(x)=\sum_{j=0}^{n_{1}-1}f_{1j}x^{j}$, 
$f_{2}(x)=\sum_{j=0}^{n_{2}-1}f_{2j}x^{j}$, the following equation holds.
\begin{equation*}
\begin{aligned}
&[H_{\phi_{1},\phi_{2}}^{*}(f_{1},f_{2})_{(n_{1}+n_{2})\times m}]^{T}\begin{bmatrix}V_{\phi_{1}}^{n_{1}} & \\ & V_{\phi_{2}}^{n_{2}} \end{bmatrix}\\
&=V_{\phi_{1},\phi_{2}}^{m\times (n_{1}+n_{2})}diag\{f_{1}(w_{1}),\cdots,f_{1}(w_{n_{1}}),f_{2}(v_{1}),\cdots,f_{2}(v_{n_{2}})\}.
\end{aligned}
\end{equation*}
\end{lemma}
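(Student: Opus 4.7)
The plan is to verify the claimed identity column by column, reducing everything to Lemma \ref{lemma2.9}. First I would observe that the block-diagonal matrix $\begin{bmatrix}V_{\phi_{1}}^{n_{1}} & \\ & V_{\phi_{2}}^{n_{2}}\end{bmatrix}$ has exactly $n_{1}+n_{2}$ columns: the $i$-th column for $1\leq i\leq n_{1}$ is $[1,w_{i},\cdots,w_{i}^{n_{1}-1},0,\cdots,0]^{T}$, and the $(n_{1}+j)$-th column for $1\leq j\leq n_{2}$ is $[0,\cdots,0,1,v_{j},\cdots,v_{j}^{n_{2}-1}]^{T}$. Right-multiplying $[H_{\phi_{1},\phi_{2}}^{*}(f_{1},f_{2})_{(n_{1}+n_{2})\times m}]^{T}$ by this block-diagonal matrix amounts to applying it to each of these $n_{1}+n_{2}$ vectors in turn and assembling the resulting $m$-vectors as columns.

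Next I would invoke Lemma \ref{lemma2.9} directly. Applied to the first $n_{1}$ columns it produces $f_{1}(w_{i})[1,w_{i},\cdots,w_{i}^{m-1}]^{T}$, and applied to the last $n_{2}$ columns it produces $f_{2}(v_{j})[1,v_{j},\cdots,v_{j}^{m-1}]^{T}$ (reading the right-hand side of Lemma \ref{lemma2.9} as the natural $m$-vector, in parallel with Lemma \ref{lemma2.3}, since the matrix $[H^{*}]^{T}$ has $m$ rows). Stacking these columns side by side is by definition the matrix $[V_{\phi_{1}}^{m\times n_{1}},\,V_{\phi_{2}}^{m\times n_{2}}]=V_{\phi_{1},\phi_{2}}^{m\times(n_{1}+n_{2})}$, but with its $k$-th column scaled by $f_{1}(w_{k})$ or $f_{2}(v_{k})$ as appropriate. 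Since per-column scaling is precisely right-multiplication by a diagonal matrix, this column assembly equals $V_{\phi_{1},\phi_{2}}^{m\times(n_{1}+n_{2})}\,diag\{f_{1}(w_{1}),\cdots,f_{1}(w_{n_{1}}),f_{2}(v_{1}),\cdots,f_{2}(v_{n_{2}})\}$, which is the claimed right-hand side.

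I do not expect any genuine obstacle; the argument is a mechanical lift of the proof of Lemma \ref{lemma2.4} from Lemma \ref{lemma2.3} to the block setting. The only point requiring care is the bookkeeping of indices: one must verify that the zero-padding pattern in the two families of eigenvectors furnished by Lemma \ref{lemma2.9} lines up exactly with the two diagonal blocks of the Vandermonde matrix on the left and with the two ideal-matrix strips stacked inside $H_{\phi_{1},\phi_{2}}^{*}(f_{1},f_{2})$. Once this alignment is fixed, the identity follows by reading off the columns.
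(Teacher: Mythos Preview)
Your proposal is correct and matches the paper's (implicit) approach: the paper gives no separate proof for Lemma~\ref{lemma2.10}, treating it as the block analogue of Lemma~\ref{lemma2.4}, whose proof is simply ``obtained by Lemma~\ref{lemma2.3}.'' Your column-by-column application of Lemma~\ref{lemma2.9}, together with your remark that the right-hand side of Lemma~\ref{lemma2.9} should be read as the $m$-vector $[1,w_i,\dots,w_i^{m-1}]^T$ (respectively $[1,v_j,\dots,v_j^{m-1}]^T$) in parallel with Lemma~\ref{lemma2.3}, is exactly the intended argument.
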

Next, similar to Theorem \ref{theorem2.5}, a theorem about double ideal matrices is given.
\begin{theorem}\label{theorem2.11}
Let $\phi_{3}(x)$ be $gcd(\phi_{1}(x),\phi_{2}(x))$, $n_{3}$ be $deg\phi_{3}(x)$. $d$ is the degree of
$$\frac{gcd(f_{1}(x),\phi_{1}(x))\cdot gcd(f_{2}(x),\phi_{2}(x))\cdot\phi_{3}(x)}{gcd(f_{1}(x)\cdot f_{2}(x),\phi_{3}(x))},$$
$r=min\{m,n_{1}+n_{2}-d\}$, the following conclusion is then reached.

(i)the rank of $H_{\phi_{1},\phi_{2}}^{*}(f_{1},f_{2})_{(n_{1}+n_{2})\times m}$ is $r$.

(ii)the first $r$ columns of $H_{\phi_{1},\phi_{2}}^{*}(f_{1},f_{2})_{(n_{1}+n_{2})\times m}$ are linearly independent.
\end{theorem}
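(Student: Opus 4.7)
The plan is to mirror the proof of Theorem \ref{theorem2.5}, now using Lemma \ref{lemma2.10} as the substitute for Lemma \ref{lemma2.4}. The block-diagonal matrix $\mathrm{diag}(V^{n_1}_{\phi_1},V^{n_2}_{\phi_2})$ is invertible because each of $\phi_1,\phi_2$ has pairwise distinct roots; hence both the rank and the linear independence of any prescribed set of rows of $[H_{\phi_1,\phi_2}^{*}(f_1,f_2)_{(n_1+n_2)\times m}]^T$ agree with those of the right-hand side
$$M=V_{\phi_{1},\phi_{2}}^{m\times (n_{1}+n_{2})}\,\mathrm{diag}\{f_{1}(w_{1}),\ldots,f_{1}(w_{n_{1}}),f_{2}(v_{1}),\ldots,f_{2}(v_{n_{2}})\}.$$
So the whole task reduces to computing the rank of $M$ and checking that its first $r$ rows are linearly independent.

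First, I would describe the columns of $M$: the $i$-th is $f_1(w_i)[1,w_i,\ldots,w_i^{m-1}]^T$ for $1\le i\le n_1$, and the $(n_1+j)$-th is $f_2(v_j)[1,v_j,\ldots,v_j^{m-1}]^T$ for $1\le j\le n_2$. Two such columns lie in the same one-dimensional subspace precisely when they correspond to the same scalar, which happens exactly for the common roots of $\phi_1$ and $\phi_2$, i.e.\ the roots of $\phi_3$. Hence the column span of $M$ is the span of the Vandermonde-type vectors $[1,\alpha,\ldots,\alpha^{m-1}]^T$ as $\alpha$ ranges over the distinct values in
$$A\cup B,\qquad A=\{w_i:f_1(w_i)\neq 0\},\ B=\{v_j:f_2(v_j)\neq 0\}.$$

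Second, I would count $|A\cup B|$ by inclusion--exclusion. Clearly $|A|=n_1-\deg\gcd(f_1,\phi_1)$ and $|B|=n_2-\deg\gcd(f_2,\phi_2)$. An element of $A\cap B$ is a root of $\phi_3=\gcd(\phi_1,\phi_2)$ at which neither $f_1$ nor $f_2$ vanishes; since $\phi_3\mid \phi_1$ and $\phi_1$ has no repeated roots, $\phi_3$ also has simple roots, and therefore $\deg\gcd(f_1f_2,\phi_3)$ equals the number of roots of $\phi_3$ at which $f_1$ or $f_2$ vanishes. Thus $|A\cap B|=n_3-\deg\gcd(f_1f_2,\phi_3)$, and inclusion--exclusion together with the definition of $d$ gives $|A\cup B|=n_1+n_2-d$. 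Since distinct $\alpha$'s yield linearly independent Vandermonde vectors, the rank of $M$ equals $\min(m,|A\cup B|)=r$, which proves (i).

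For (ii), I would exhibit a nonsingular $r\times r$ submatrix inside the first $r$ rows of $M$. Because $r\le|A\cup B|$, I can pick $r$ distinct scalars $\alpha_1,\ldots,\alpha_r\in A\cup B$ and, for each $\alpha_s$, a column of $M$ that is a nonzero multiple of $[1,\alpha_s,\ldots,\alpha_s^{m-1}]^T$; restricting to the first $r$ rows yields a classical $r\times r$ Vandermonde matrix on distinct nodes scaled by nonzero factors, hence invertible. So the first $r$ rows of $M$ are linearly independent, and transferring back through Lemma \ref{lemma2.10} and the invertibility of $\mathrm{diag}(V^{n_1}_{\phi_1},V^{n_2}_{\phi_2})$ shows that the first $r$ rows of $[H_{\phi_1,\phi_2}^{*}(f_1,f_2)_{(n_1+n_2)\times m}]^T$, i.e.\ the first $r$ columns of the double ideal matrix, are linearly independent.

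The main obstacle is the second step: unlike Theorem \ref{theorem2.5}, where the correspondence between columns and scalars is one-to-one, here the common roots in $\phi_3$ produce pairs of columns living in the same line, and one must verify that the combinatorial formula defining $d$ really equals $|A|+|B|-|A\cap B|$. The key simplification that makes the inclusion--exclusion work cleanly is that $\phi_3$ inherits simple roots from $\phi_1$, so $\deg\gcd(f_1f_2,\phi_3)$ has the desired ``at least one of $f_1,f_2$ vanishes'' interpretation; once this is observed, the argument follows the Theorem \ref{theorem2.5} blueprint.
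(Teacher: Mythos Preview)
Your proposal is correct and follows essentially the same route as the paper: both reduce via Lemma \ref{lemma2.10} and the invertibility of $\mathrm{diag}(V_{\phi_1}^{n_1},V_{\phi_2}^{n_2})$ to analyzing the column structure of $M$, count the distinct scalars contributing nonzero columns, and then extract a Vandermonde minor in the first $r$ rows. The only cosmetic difference is that the paper introduces the symbols $e_1,e_2,e$ and partitions the relevant roots explicitly, whereas you package the same count as an inclusion--exclusion $|A\cup B|=|A|+|B|-|A\cap B|$; the key observation that $\phi_3$ inherits simple roots from $\phi_1$, so that $\deg\gcd(f_1f_2,\phi_3)$ counts roots of $\phi_3$ where at least one $f_i$ vanishes, is identical in both arguments.
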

\begin{proof}
To prove that the rank of 
$$V_{\phi_{1},\phi_{2}}^{m\times (n_{1}+n_{2})}diag\{f_{1}(w_{1}),\cdots,f_{1}(w_{n_{1}}),f_{2}(v_{1}),\cdots,f_{2}(v_{n_{2}})\}$$
is $r$ and that the first $r$ rows is linearly independent, the following symbols are given: 

$e_{1}$ is the degree of $gcd(f_{1}(x),\phi_{1}(x))$; 

$e_{2}$ is the degree of $gcd(f_{2}(x),\phi_{2}(x))$;

$e$ is the degree of $\frac{\phi_{3}(x)}{gcd(f_{1}(x)\cdot f_{2}(x),\phi_{3}(x))}$.

So, $d=e_{1}+e_{2}+e$.
Since $\phi_{3}(x)$ is a divisor of $\phi_{1}(x)$ and $\phi_{1}(x)$ has no multiple roots, $\phi_{3}(x)$ also has no multiple roots. Let all roots of $\phi_{3}(x)$ be $c_{1},\cdots,c_{n_{3}}$.

All roots of $gcd(f_{1}(x)\cdot f_{2}(x),\phi_{3}(x))$ are all common roots of $\phi_{3}(x)$ and $\phi_{1}(x)$ and all common roots of $\phi_{3}(x)$ and $\phi_{2}(x)$, so all roots of $\frac{\phi_{3}(x)}{gcd(f_{1}(x)\cdot f_{2}(x),\phi_{3}(x))}$ are the roots of $\phi_{3}(x)$ which are neither the roots of $f_{1}(x)$ nor the roots of $f_{2}(x)$. In other words, there are exactly $e$ roots in $\phi_{3}(x)$, denoted here as $$c_{j_{1}},\cdots,c_{j_{e}}(1\leq j_{1}<\cdots<j_{e}\leq n_{3})$$ such that $$f_{1}(c_{j_{i}})\not=0\not=f_{2}(c_{j_{i}})(i=1,\cdots,e).$$

Since $e_{1}$ is the degree of $gcd(f_{1}(x),\phi_{1}(x))$, there are exactly $e_{1}$ roots of $\phi_{1}(x)$ that are also roots of $f_{1}(x)$, and exactly $n_{1}-e_{1}$ roots that are not roots of $f_{1}(x)$, i.e. $f_{1}(w_{1}),\cdots,f_{1}(w_{n_{1}})$ has $e_{1}$ values that are zero and $n_{1}-e_{1}$ values that are not zero. Similarly, $f_{2}(v_{1}),\cdots,f_{2}(v_{n_{2}})$ has $e_{2}$ values which are zero and $n_{2}-e_{2}$ values which are not zero.

In summary, the following conclusions can be drawn:

There are $n_{1}-e_{1}$ roots in $\phi_{1}(x)$ that are not roots of $f_{1}(x)$, and these roots can be written as
$$w_{k_{1}},\cdots,w_{k_{n_{1}-e_{1}-e}},c_{j_{1}},\cdots,c_{j_{e}}(1\leq k_{1}<\cdots<k_{n_{1}-e_{1}-e}\leq n_{1});$$

There are $n_{2}-e_{2}$ roots in $\phi_{2}(x)$ that are not roots of $f_{2}(x)$, and these roots can be written as
$$v_{l_{1}},\cdots,v_{l_{n_{2}-e_{2}-e}},c_{j_{1}},\cdots,c_{j_{e}}(1\leq l_{1}<\cdots<l_{n_{2}-e_{2}-e}\leq n_{2}).$$

$w_{k_{1}},\cdots,w_{k_{n_{1}-e_{1}-e}},c_{j_{1}},\cdots,c_{j_{e}},v_{l_{1}},\cdots,v_{l_{n_{2}-e_{2}-e}}$ are different. So, in $$V_{\phi_{1},\phi_{2}}^{m\times (n_{1}+n_{2})}diag\{f_{1}(w_{1}),\cdots,f_{1}(w_{n_{1}}),f_{2}(v_{1}),\cdots,f_{2}(v_{n_{2}})\},$$ the columns represented by 
$$f_{1}(w_{k_{1}}),\cdots,f_{1}(w_{k_{n_{1}-e_{1}-e}}),f_{1}(c_{j_{1}}),\cdots,f_{1}(c_{j_{e}}),$$ $$f_{2}(v_{l_{1}}),\cdots,f_{2}(v_{l_{n_{2}-e_{2}-e}}),f_{2}(c_{j_{1}}),\cdots,f_{2}(c_{j_{e}})$$
are all non-zero columns, the rest are all zero vectors. The columns represented by $f_{1}(c_{j_{i}})$ and $f_{2}(c_{j_{i}})$ can be expressed linearly with each other $(1\leq i\leq e)$. Therefore, by the columns represented by 
$$f_{1}(w_{k_{1}}),\cdots,f_{1}(w_{k_{n_{1}-e_{1}-e}}),f_{1}(c_{j_{1}}),\cdots,f_{1}(c_{j_{e}}),f_{2}(v_{l_{1}}),\cdots,f_{2}(v_{l_{n_{2}-e_{2}-e}}),$$
it can be seen that

When $$m\geq n_{1}-e_{1}+n_{2}-e_{2}-e=n_{1}+n_{2}-d,$$ the rank of $V_{\phi_{1},\phi_{2}}^{m\times (n_{1}+n_{2})}diag\{f_{1}(w_{1}),\cdots,f_{1}(w_{n_{1}}),f_{2}(v_{1}),\cdots,f_{2}(v_{n_{2}})\}$ is $n_{1}+n_{2}-d$; 

When $$m<n_{1}-e_{1}+n_{2}-e_{2}-e=n_{1}+n_{2}-d,$$ the rank of $V_{\phi_{1},\phi_{2}}^{m\times (n_{1}+n_{2})}diag\{f_{1}(w_{1}),\cdots,f_{1}(w_{n_{1}}),f_{2}(v_{1}),\cdots,f_{2}(v_{n_{2}})\}$ is $m$.

So, the rank of this matrix is $r$. thus, the rank of $$[H_{\phi_{1},\phi_{2}}^{*}(f_{1},f_{2})_{(n_{1}+n_{2})\times m}]^{T}\begin{bmatrix}V_{\phi_{1}}^{n_{1}} & \\ & V_{\phi_{2}}^{n_{2}} \end{bmatrix}$$ is $r$. Since $\begin{bmatrix}V_{\phi_{1}}^{n_{1}} & \\ & V_{\phi_{2}}^{n_{2}} \end{bmatrix}$ is a full-rank matrix, the rank of $[H_{\phi_{1},\phi_{2}}^{*}(f_{1},f_{2})_{(n_{1}+n_{2})\times m}]^{T}$ is also $r$.

In the first $r$ rows of $$V_{\phi_{1},\phi_{2}}^{m\times (n_{1}+n_{2})}diag\{f_{1}(w_{1}),\cdots,f_{1}(w_{n_{1}}),f_{2}(v_{1}),\cdots,f_{2}(v_{n_{2}})\},$$ an $r$-order non-zero minor determinant can be found, so the first $r$ rows of this matrix are linearly independent. By Lemma \ref{lemma2.10}, we can get that the first $r$ rows of  $$[H_{\phi_{1},\phi_{2}}^{*}(f_{1},f_{2})_{(n_{1}+n_{2})\times m}]^{T}\begin{bmatrix}V_{\phi_{1}}^{n_{1}} & \\ & V_{\phi_{2}}^{n_{2}} \end{bmatrix}$$ are linearly independent. Since $\begin{bmatrix}V_{\phi_{1}}^{n_{1}} & \\ & V_{\phi_{2}}^{n_{2}} \end{bmatrix}$ is an invertible matrix and right multiplication of an invertible matrix does not change the linear independence of the rows, the first $r$ rows of $[H_{\phi_{1},\phi_{2}}^{*}(f_{1},f_{2})_{(n_{1}+n_{2})\times m}]^{T}$ are linearly independent, i.e. the first $r$ columns of $H_{\phi_{1},\phi_{2}}^{*}(f_{1},f_{2})_{(n_{1}+n_{2})\times m}$ are linearly independent.
\end{proof}
\begin{corollary}\label{corollary2.12}
$H_{\phi_{1},\phi_{2}}^{*}(H_{\phi_{1}}f_{1},H_{\phi_{2}}f_{2})_{(n_{1}+n_{2})\times m}$ also has the two properties described in Theorem \ref{theorem2.11}.
\end{corollary}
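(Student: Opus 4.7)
The plan is to mimic the argument for Corollary~\ref{corollary2.6}, namely to exhibit the new matrix as a left multiplication of the old one by an invertible matrix, and then to check that the value of $r$ is unchanged by the substitution $f_{i}\mapsto H_{\phi_{i}}f_{i}$. Concretely, I would first establish the block factorization
\begin{equation*}
H_{\phi_{1},\phi_{2}}^{*}(H_{\phi_{1}}f_{1},H_{\phi_{2}}f_{2})_{(n_{1}+n_{2})\times m}
=\begin{bmatrix}H_{\phi_{1}} & 0\\ 0 & H_{\phi_{2}}\end{bmatrix}
\cdot H_{\phi_{1},\phi_{2}}^{*}(f_{1},f_{2})_{(n_{1}+n_{2})\times m},
\end{equation*}
which is immediate from Definition~\ref{definition2.8} and the identity $H_{\phi_{i}}\cdot H_{\phi_{i}}^{k}f_{i}=H_{\phi_{i}}^{k+1}f_{i}$ applied columnwise in each block.

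Next I would note that the block-diagonal factor is invertible: each $H_{\phi_{i}}$ has determinant $\pm\phi_{i,0}\neq 0$ by hypothesis, so the whole block matrix has nonzero determinant. Since left multiplication by an invertible matrix preserves rank, the new matrix has the same rank as $H_{\phi_{1},\phi_{2}}^{*}(f_{1},f_{2})_{(n_{1}+n_{2})\times m}$, namely $r$. For (ii), I would use that the first $r$ columns of the new matrix are precisely the images under $\mathrm{diag}(H_{\phi_{1}},H_{\phi_{2}})$ of the first $r$ columns of the old matrix; since an invertible linear map preserves linear independence of vectors, Theorem~\ref{theorem2.11}(ii) for the old matrix transfers directly to the new one.

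The one delicate point is verifying that the $r$ determined by $(H_{\phi_{1}}f_{1},H_{\phi_{2}}f_{2})$ coincides with the $r$ determined by $(f_{1},f_{2})$, so that ``the two properties'' really refer to the same value. Setting $g_{i}(x)=xf_{i}(x)\bmod\phi_{i}(x)$ (the polynomial corresponding to $H_{\phi_{i}}f_{i}$), I would observe that $\phi_{i}(0)=-\phi_{i,0}\neq 0$ implies $\gcd(x,\phi_{i})=1$, hence $\gcd(g_{i},\phi_{i})=\gcd(xf_{i},\phi_{i})=\gcd(f_{i},\phi_{i})$. For the factor $\gcd(g_{1}g_{2},\phi_{3})$ I would use that every root $c$ of $\phi_{3}$ is a root of both $\phi_{1}$ and $\phi_{2}$ and satisfies $c\neq 0$, so $g_{i}(c)=cf_{i}(c)$ vanishes iff $f_{i}(c)$ does; since both gcds are squarefree divisors of $\phi_{3}$ with the same root set, they agree. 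Consequently $e_{1}$, $e_{2}$, $e$, $d$, and thus $r$, are all unchanged.

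I expect this invariance check to be the main (mild) obstacle; the remainder is a direct invertible-factor argument perfectly parallel to Corollary~\ref{corollary2.6}.
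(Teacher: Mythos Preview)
Your proposal is correct and follows essentially the same route as the paper: both exhibit the block factorization
\[
H_{\phi_{1},\phi_{2}}^{*}(H_{\phi_{1}}f_{1},H_{\phi_{2}}f_{2})_{(n_{1}+n_{2})\times m}
=\begin{bmatrix}H_{\phi_{1}}&0\\0&H_{\phi_{2}}\end{bmatrix}
H_{\phi_{1},\phi_{2}}^{*}(f_{1},f_{2})_{(n_{1}+n_{2})\times m}
\]
and then use invertibility of the block-diagonal factor to transfer the rank and the linear independence of the first $r$ columns. Your additional verification that the value of $r$ is unchanged under $f_{i}\mapsto H_{\phi_{i}}f_{i}$ is in fact more careful than the paper's own proof, which does not address that point explicitly.
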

\begin{proof}
The rank of $H_{\phi_{1},\phi_{2}}^{*}(f_{1},f_{2})_{(n_{1}+n_{2})\times m}$ is $r$, $H_{\phi_{1}}$ and $H_{\phi_{2}}$ are full-rank matrices, so $\begin{bmatrix}H_{\phi_{1}}& \\ &H_{\phi_{2}}\end{bmatrix}$ is also a full-rank matrix. As a result, the rank of
\begin{equation*}
\begin{aligned}
&H_{\phi_{1},\phi_{2}}^{*}(H_{\phi_{1}}f_{1},H_{\phi_{2}}f_{2})_{(n_{1}+n_{2})\times m}\\
&=\begin{bmatrix}H_{\phi_{1}}^{*}(H_{\phi_{1}}f_{1})_{n_{1}\times m}\\H_{\phi_{2}}^{*}(H_{\phi_{1}}f_{2})_{n_{2}\times m} \end{bmatrix}\\
&=\begin{bmatrix}H_{\phi_{1}}f_{1}&H_{\phi_{1}}^{2}f_{1}&H_{\phi_{1}}^{3}f_{1}&\cdots&H_{\phi_{1}}^{m}f_{1}\\
H_{\phi_{2}}f_{2}&H_{\phi_{2}}^{2}f_{2}&H_{\phi_{2}}^{3}f_{2}&\cdots&H_{\phi_{2}}^{m}f_{2}
\end{bmatrix}_{(n_{1}+n_{2})\times m}\\
&=\begin{bmatrix}H_{\phi_{1}}& \\ &H_{\phi_{2}}\end{bmatrix}\begin{bmatrix}f_{1}&H_{\phi_{1}}f_{1}&H_{\phi_{1}}^{2}f_{1}&\cdots&H_{\phi_{1}}^{m-1}f_{1}\\
f_{2}&H_{\phi_{2}}f_{2}&H_{\phi_{2}}^{2}f_{2}&\cdots&H_{\phi_{2}}^{m-1}f_{2}
\end{bmatrix}_{(n_{1}+n_{2})\times m}
\end{aligned}
\end{equation*}
is also $r$. Therefore, the first $r$ columns of $H_{\phi_{1},\phi_{2}}^{*}(H_{\phi_{1}}f_{1},H_{\phi_{2}}f_{2})_{(n_{1}+n_{2})\times m}$ are linearly independent.
\end{proof}
\begin{corollary}\label{corollary2.13}
Any $r$ consecutive columns of $H_{\phi_{1},\phi_{2}}^{*}(f_{1},f_{2})_{(n_{1}+n_{2})\times m}$ are linearly independent.
\end{corollary}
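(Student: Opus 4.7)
The plan is to mirror the argument used for Corollary \ref{corollary2.7}, which handled the analogous statement for the single generalized ideal matrix. The key observation is that for any $0\le d\le m-r$, the $(d+1)$-st through $(d+r)$-th columns of $H_{\phi_{1},\phi_{2}}^{*}(f_{1},f_{2})_{(n_{1}+n_{2})\times m}$ have the form
\begin{equation*}
\begin{bmatrix}H_{\phi_{1}}^{d}f_{1}\\H_{\phi_{2}}^{d}f_{2}\end{bmatrix},\ \begin{bmatrix}H_{\phi_{1}}^{d+1}f_{1}\\H_{\phi_{2}}^{d+1}f_{2}\end{bmatrix},\ \cdots,\ \begin{bmatrix}H_{\phi_{1}}^{d+r-1}f_{1}\\H_{\phi_{2}}^{d+r-1}f_{2}\end{bmatrix},
\end{equation*}
which are precisely the first $r$ columns of $H_{\phi_{1},\phi_{2}}^{*}(H_{\phi_{1}}^{d}f_{1},H_{\phi_{2}}^{d}f_{2})_{(n_{1}+n_{2})\times m}$.

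Next, I would apply Corollary \ref{corollary2.12} iteratively $d$ times, starting from $(f_{1},f_{2})$ and replacing the pair with $(H_{\phi_{1}}f_{1},H_{\phi_{2}}f_{2})$ at each step, to conclude that $H_{\phi_{1},\phi_{2}}^{*}(H_{\phi_{1}}^{d}f_{1},H_{\phi_{2}}^{d}f_{2})_{(n_{1}+n_{2})\times m}$ still satisfies the two conclusions of Theorem \ref{theorem2.11}; in particular its rank remains $r$ and its first $r$ columns are linearly independent. Combining this with the identification of the relevant columns in the previous paragraph gives linear independence of any $r$ consecutive columns of the original matrix.

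The one point that deserves a brief check is that the invariant $r$ is genuinely preserved under the replacement $(f_{1},f_{2})\mapsto (H_{\phi_{1}}f_{1},H_{\phi_{2}}f_{2})$. This is immediate from the proof of Corollary \ref{corollary2.12}: left-multiplication by the invertible block matrix $\begin{bmatrix}H_{\phi_{1}}& \\ &H_{\phi_{2}}\end{bmatrix}$ performs row operations that leave rank and linear independence of columns unchanged. Beyond this small sanity check, the argument is essentially a mechanical adaptation of Corollary \ref{corollary2.7}, so I do not anticipate any genuine obstacle.
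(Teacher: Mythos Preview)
Your proposal is correct and follows essentially the same approach as the paper's own proof: identify the $(d+1)$-st through $(d+r)$-th columns of the double ideal matrix with the first $r$ columns of $H_{\phi_{1},\phi_{2}}^{*}(H_{\phi_{1}}^{d}f_{1},H_{\phi_{2}}^{d}f_{2})_{(n_{1}+n_{2})\times m}$, then apply Corollary~\ref{corollary2.12} iteratively to conclude that this shifted matrix still satisfies the two conclusions of Theorem~\ref{theorem2.11}. The paper's argument is slightly terser (it simply says ``step by step from Corollary~\ref{corollary2.12}''), whereas you spell out the sanity check that the rank invariant $r$ is preserved under left multiplication by the invertible block matrix, but the substance is identical.
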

\begin{proof}
The $d+1$-st to $d+r$-st columns of $H_{\phi_{1},\phi_{2}}^{*}(f_{1},f_{2})_{(n_{1}+n_{2})\times m}$ are in fact the first r columns of $H_{\phi_{1},\phi_{2}}^{*}(H_{\phi_{1}}^{d}f_{1},H_{\phi_{2}}^{d}f_{2})_{(n_{1}+n_{2})\times m}$.  It follows step by step from
Corollary \ref{corollary2.12} that $H_{\phi_{1},\phi_{2}}^{*}(H_{\phi_{1}}^{d}f_{1},H_{\phi_{2}}^{d}f_{2})_{(n_{1}+n_{2})\times m}$ also has the two properties described in Theorem \ref{theorem2.11}.
\end{proof}
At the end of this section we discuss a special case, namely $n_{1}+n_{2}=m$. In this case, the corresponding double ideal matrix is an $m\times m$ square matrix. For this type of matrix, the following corollary can be obtained.
\begin{corollary}\label{corollary2.14}
For the matrix $[H_{\phi_{1},\phi_{2}}^{*}(f_{1},f_{2})_{m\times m}]^{T}$, the following three kinds of eigenvectors with eigenvalues of 0 can be found.

(i)The vector corresponding to each root $w_{p}$ of $gcd(f_{1}(x),\phi_{1}(x))$
\begin{equation*}
[1,w_{p},\cdots,w_{p}^{n_{1}-1},0, \cdots,0]^{T};
\end{equation*}

(ii)The vector corresponding to each root $v_{q}$ of $gcd(f_{2}(x),\phi_{2}(x))$
\begin{equation*}
[0,\cdots,0,1,v_{q}, \cdots,v_{q}^{n_{2}-1}]^{T};
\end{equation*}

(iii)For $c_{j_{1}},\cdots,c_{j_{e}}$, mentioned in the proof of Theorem \ref{theorem2.11}, there are corresponding vectors
\begin{equation*}
[-f_{2}(c_{j}),-f_{2}(c_{j})c_{j},\cdots,-f_{2}(c_{j})c_{j}^{n_{1}-1},f_{1}(c_{j}),f_{1}(c_{j})c_{j}, \cdots,f_{1}(c_{j})c_{j}^{n_{2}-1}]^{T}.
\end{equation*}
\end{corollary}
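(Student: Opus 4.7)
The plan is to derive each of the three families of kernel vectors as a direct consequence of Lemma \ref{lemma2.9}, with case (iii) requiring one extra linearity step. Because $m=n_{1}+n_{2}$, the matrix $[H_{\phi_{1},\phi_{2}}^{*}(f_{1},f_{2})_{m\times m}]^{T}$ is square, so a nonzero vector is an eigenvector with eigenvalue $0$ precisely when it lies in the kernel; in each case it suffices to substitute the claimed vector into Lemma \ref{lemma2.9} and check that the image vanishes.

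For (i), let $w_{p}$ be a root of $\gcd(f_{1}(x),\phi_{1}(x))$. Then $w_{p}$ is in particular a root of $\phi_{1}(x)$, so the first identity of Lemma \ref{lemma2.9} applies; since moreover $f_{1}(w_{p})=0$, the scalar in front of the output is zero and $[1,w_{p},\ldots,w_{p}^{n_{1}-1},0,\ldots,0]^{T}$ is annihilated. Part (ii) is entirely symmetric, using the second identity of Lemma \ref{lemma2.9} together with $f_{2}(v_{q})=0$.

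Case (iii) is the only one that is not literally a single application of Lemma \ref{lemma2.9}. I would first rewrite the proposed vector as the linear combination
\begin{equation*}
-f_{2}(c_{j})\cdot[1,c_{j},\ldots,c_{j}^{n_{1}-1},0,\ldots,0]^{T}+f_{1}(c_{j})\cdot[0,\ldots,0,1,c_{j},\ldots,c_{j}^{n_{2}-1}]^{T}.
\end{equation*}
Because $c_{j}$ is a common root of $\phi_{1}(x)$ and $\phi_{2}(x)$, both summands are valid inputs to the two parts of Lemma \ref{lemma2.9}. In the square case $m=n_{1}+n_{2}$ the two outputs are scalar multiples of the \emph{same} $m$-dimensional Vandermonde vector $[1,c_{j},\ldots,c_{j}^{m-1}]^{T}$: the first summand is sent to $-f_{2}(c_{j})f_{1}(c_{j})$ times this vector, the second to $+f_{1}(c_{j})f_{2}(c_{j})$ times it, and the two contributions cancel.

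The main obstacle is the bookkeeping in (iii): one has to verify that the two images produced by Lemma \ref{lemma2.9} really lie along a common direction in the ambient $m$-dimensional space, so that the coefficients $-f_{2}(c_{j})$ and $+f_{1}(c_{j})$ force an exact cancellation rather than just reducing the norm. Once this direction-matching is confirmed, the whole corollary follows immediately from Lemma \ref{lemma2.9} and no new ideas beyond those already used in the proof of Theorem \ref{theorem2.11} are required.
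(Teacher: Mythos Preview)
Your proof is correct and follows the same route as the paper: parts (i) and (ii) are immediate from Lemma~\ref{lemma2.9}, and for (iii) the paper likewise splits the vector using the block decomposition $([H_{\phi_{1}}^{*}(f_{1})]^{T},[H_{\phi_{2}}^{*}(f_{2})]^{T})$, obtains $-f_{2}(c_{j})f_{1}(c_{j})[1,c_{j},\ldots,c_{j}^{m-1}]^{T}+f_{1}(c_{j})f_{2}(c_{j})[1,c_{j},\ldots,c_{j}^{m-1}]^{T}$ via Lemma~\ref{lemma2.3}, and observes the cancellation. The ``direction-matching'' you flag as the main obstacle is in fact automatic from Lemma~\ref{lemma2.3} and does not depend on $m=n_{1}+n_{2}$; squareness is only needed so that ``kernel vector'' and ``eigenvector with eigenvalue $0$'' are synonymous.
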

\begin{proof}
Both (i) and (ii) follow directly from Lemma \ref{lemma2.9}, and
\begin{equation*}
\begin{aligned}
&[H_{\phi_{1},\phi_{2}}^{*}(f_{1},f_{2})_{m\times m}]^{T}\times \\
&[-f_{2}(c_{j}),-f_{2}(c_{j})c_{j},\cdots,-f_{2}(c_{j})c_{j}^{n_{1}-1},f_{1}(c_{j}),f_{1}(c_{j})c_{j}, \cdots,f_{1}(c_{j})c_{j}^{n_{2}-1}]^{T}\\
&=([H_{\phi_{1}}^{*}(f_{1})_{n_{1}\times m}]^{T},[H_{\phi_{2}}^{*}(f_{2})_{n_{2}\times m}]^{T})\times \\
&[-f_{2}(c_{j}),-f_{2}(c_{j})c_{j},\cdots,-f_{2}(c_{j})c_{j}^{n_{1}-1},f_{1}(c_{j}),f_{1}(c_{j})c_{j}, \cdots,f_{1}(c_{j})c_{j}^{n_{2}-1}]^{T}\\
&=-f_{2}(c_{j})f_{1}(c_{j})[1,c_{j},\cdots,c_{j}^{m}]^{T}+f_{1}(c_{j})f_{2}(c_{j})[1,c_{j},\cdots,c_{j}^{m}]^{T} \\
&=0.
\end{aligned}
\end{equation*}
So, (iii) is also true.
\end{proof}
\begin{corollary}\label{corollary2.15}
$H_{\phi_{1},\phi_{2}}^{*}(f_{1},f_{2})_{(n_{1}+n_{2})\times (n_{1}+n_{2})}$  is a full rank matrix if and only if $(f_{1}(x),\phi_{1}(x))=1$, $(f_{2}(x),\phi_{2}(x))=1$ and $(\phi_{1}(x),\phi_{2}(x))=1$.
\end{corollary}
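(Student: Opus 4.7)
The plan is to deduce the corollary directly from Theorem~\ref{theorem2.11} applied with $m = n_1+n_2$, by analyzing when the integer $d$ vanishes. With this choice of $m$, the quantity $r = \min\{m, n_1+n_2-d\}$ simplifies to $r = n_1+n_2-d$ since $d \geq 0$. Hence the square matrix $H_{\phi_1,\phi_2}^{*}(f_1,f_2)_{(n_1+n_2)\times(n_1+n_2)}$ is full rank if and only if $d = 0$, and the problem reduces to characterizing when $d = 0$ in terms of the three gcd conditions.

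Next I would use the decomposition $d = e_1 + e_2 + e$ established in the proof of Theorem~\ref{theorem2.11}, where $e_1 = \deg\gcd(f_1,\phi_1)$, $e_2 = \deg\gcd(f_2,\phi_2)$, and $e = \deg\bigl(\phi_3(x)/\gcd(f_1(x)f_2(x),\phi_3(x))\bigr)$ with $\phi_3 = \gcd(\phi_1,\phi_2)$. Since all three summands are non-negative, $d = 0$ is equivalent to $e_1 = e_2 = e = 0$. The first two equalities translate immediately into $\gcd(f_1,\phi_1) = 1$ and $\gcd(f_2,\phi_2) = 1$, giving two of the three required conditions.

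The main point is handling the third summand $e$: I need to show that, given $e_1 = e_2 = 0$, the condition $e = 0$ is equivalent to $\gcd(\phi_1,\phi_2) = 1$. For the forward direction, observe that $\phi_3 \mid \phi_1$ together with $\gcd(f_1,\phi_1) = 1$ forces $\gcd(f_1,\phi_3) = 1$, and similarly $\gcd(f_2,\phi_3) = 1$; hence $\gcd(f_1 f_2,\phi_3) = 1$, which makes $e = \deg\phi_3$. Therefore $e = 0$ forces $\phi_3 = 1$, i.e.\ $\gcd(\phi_1,\phi_2) = 1$. Conversely, if $\gcd(\phi_1,\phi_2) = 1$ then $\phi_3 = 1$ and $e = 0$ trivially.

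Putting these pieces together yields both implications of the corollary. The only subtlety I anticipate is the argument in the previous paragraph, namely the observation that the seemingly mixed term $e$ collapses to $\deg\phi_3$ once coprimality of $f_i$ and $\phi_i$ is assumed; everything else is a direct reading of Theorem~\ref{theorem2.11}.
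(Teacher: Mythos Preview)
Your proposal is correct and follows essentially the same route as the paper: apply Theorem~\ref{theorem2.11} with $m=n_1+n_2$ so that the rank is $n_1+n_2-d$, use the decomposition $d=e_1+e_2+e$, and show that $d=0$ is equivalent to the three coprimality conditions. Your treatment of the term $e$ (directly computing $e=\deg\phi_3$ once $\gcd(f_i,\phi_i)=1$) is in fact a touch cleaner than the paper's contradiction argument, but the underlying idea is identical.
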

\begin{proof}
According to Theorem \ref{theorem2.11}, the rank of a double ideal matrix is $n_{1}+n_{2}-d$. And from the proof of Theorem \ref{theorem2.11}, $d=e_{1}+e_{2}+e$, so $e_{1}=e_{2}=e=0$ if the double ideal square is full-rank. Therefore, we can first state that $(f_{1}(x),\phi_{1}(x))=1$ and $(f_{2}(x),\phi_{2}(x))=1$. if $e=0$, $$gcd(f_{1}(x)\cdot f_{2}(x),\phi_{3}(x))=\phi_{3}(x).$$ If $deg\phi_{3}(x)> 0$, then $(f_{1}(x)\cdot f_{2}(x),\phi_{3}(x))\neq 1$. So at least one of $(f_{1}(x),\phi_{1}(x))\neq 1$ and $(f_{2}(x),\phi_{2}(x))\neq 1$ holds. In this case, $d$ must be greater than $0$, so $deg\phi_{3}(x)=0$, i.e. $(\phi_{1}(x),\phi_{2}(x))=1$.
\end{proof}

\section{$\phi$-quasi cyclic codes}\label{section3}
The preconditions are given first: $q$ is a prime power and $\phi_{1}(x)$ is a $k$-degree polynomial al with no multiple roots over $\mathbb{F}_{q}$, denoted as
\begin{equation*}
\phi_{1}(x)=\phi_{1,0}+\phi_{1,1}x+\cdots+\phi_{1,k-1}x^{k-1}+x^{k}.
\end{equation*}
$\phi_{2}(x)$ is a $l$-degree polynomial over $\mathbb{F}_{q}$, denoted as
\begin{equation*}
\phi_{2}(x)=\phi_{2,0}+\phi_{2,1}x+\cdots+\phi_{2,l-1}x^{l-1}+x^{l}.
\end{equation*}
Let the degree of $\phi_{3}(x)=(\phi_{1}(x),\phi_{2}(x))$ be $m$, denoted as
\begin{equation*}
\phi_{3}(x)=\phi_{3,0}+\phi_{3,1}x+\cdots+\phi_{3,m-1}x^{m-1}+x^{m}.
\end{equation*}
Other provisions are as follows:

$\overline{R}_{k}=\mathbb{F}_{q}[x]/<\phi_{1}(x)>;$

$\overline{R}_{l}=\mathbb{F}_{q}[x]/<\phi_{2}(x)>;$

$\overline{R}_{k+l-m}=\mathbb{F}_{q}[x]/<\frac{\phi_{1}(x)\phi_{2}(x)}{\phi_{3}(x)}>$.

Next, consider the Cartesian product
\begin{equation*}
\overline{R}_{k}\times \overline{R}_{l}=\mathbb{F}_{q}[x]/<\phi_{1}(x)>\times \mathbb{F}_{q}[x]/<\phi_{2}(x)>.
\end{equation*}
In $\overline{R}_{k}\times \overline{R}_{l}$, each element can be expressed in the form $(\overline{a}(x),\overline{b}(x))$, where 
\begin{equation*}
\begin{aligned}
\overline{a}(x)=a_{0}+a_{1}x+\cdots +a_{k-1}x^{k-1}\in\overline{R}_{k};\\
\overline{b}(x)=b_{0}+b_{1}x+\cdots +b_{l-1}x^{l-1}\in\overline{R}_{l}.
\end{aligned}
\end{equation*}
Therefore, $(\overline{a}(x),\overline{b}(x))$ forms a one-to-one correspondence with the following codewords:
\begin{equation*}
(a_{0},a_{1},\cdots,a_{k-1},b_{0},b_{1},\cdots,b_{l-1})\in\mathbb{F}_{q}^{k}\times\mathbb{F}_{q}^{l}.
\end{equation*}
At this point, try multiplying the codeword $(\overline{a}(x),\overline{b}(x))$ by an $x$, we have
\begin{equation*}
\begin{aligned}
x(\overline{a}(x),\overline{b}(x))
&=x(a_{0}+a_{1}x+\cdots a_{k-1}x^{k-1},b_{0}+b_{1}x+\cdots +b_{l-1}x^{l-1})\\
& \equiv(\phi_{1,0}a_{k-1}+(a_{0}+\phi_{1,1}a_{k-1})x+\cdots +(a_{k-2}+\phi_{1,k-1}a_{k-1})x^{k-1}\\
& \pmod{\phi_{1}(x)},\\
& \phi_{2,0}b_{l-1}+(b_{0}+\phi_{2,1}b_{l-1})x+\cdots+(b_{l-2}+\phi_{2,l-1}b_{l-1})x^{l-1}\\
& \pmod{\phi_{2}(x)}).
\end{aligned}
\end{equation*}
Thus, based on the above, the definition of a $\phi$-quasi cyclic code is as follows
\begin{definition}\label{definition3.1}
For the linear subspace $\overline{C}$ of $\overline{R}_{k}\times \overline{R}_{l}$, if $x(c_{1}(x),c_{2}(x))\in\overline{C}$ for any $(c_{1}(x),c_{2}(x))\in \overline{C}$, then $\overline{C}$ is called a $\phi$-quasi cyclic code over $\overline{R}_{k}\times \overline{R}_{l}$.
\end{definition}
If $(\overline{a}(x),\overline{b}(x))\in \overline{R}_{k}\times \overline{R}_{l}$, then
\begin{equation*}
f(x)(\overline{a}(x),\overline{b}(x))=(f(x)\overline{a}(x)\pmod{\phi_{1}(x)},f(x)\overline{b}(x)\pmod{\phi_{2}(x)})\in\overline{R}_{k}\times \overline{R}_{l}.
\end{equation*}
The $\phi$-quasi cyclic code $\overline{C}_{\overline{a},\overline{b}}$ can be written in the following form:
\begin{equation*}
\{(f(x)\overline{a}(x)\pmod{\phi_{1}(x)},f(x)\overline{b}(x)\pmod{\phi_{2}(x)})\in\overline{R}_{k}\times \overline{R}_{l}|f(x)\in\overline{R}_{k+l-m}\}.
\end{equation*}
where $(\overline{a}(x),\overline{b}(x))$ is called the generating element of the $\phi$-quasi cyclic code.Then the ideal of $\overline{R}_{k+l-m}$ generated by $f(x)$ is written as $<f(x)>_{\overline{R}_{k+l-m}}$. 

In this paper we give a theorem to describe the related properties of $\phi$-quasi cyclic codes.
\begin{theorem}\label{theorem3.2}
For each $(\overline{a}(x),\overline{b}(x))\in \overline{R}_{k}\times \overline{R}_{l}$, let
\begin{equation*}
\begin{aligned}
\overline{g}_{\overline{a},\overline{b}}(x)&=gcd(\overline{a}(x),\frac{\phi_{1}(x)}{\phi_{3}(x)})\times gcd(\overline{b}(x),\frac{\phi_{2}(x)}{\phi_{3}(x)})\times gcd(\overline{a}(x),\overline{b}(x),\phi_{3}(x)).\\
\overline{h}_{\overline{a},\overline{b}}(x)&=\frac{\phi_{1}(x)\phi_{2}(x)}{\phi_{3}(x)\overline{g}_{\overline{a},\overline{b}}(x)}.
\end{aligned}
\end{equation*}
And give the mapping $\overline{r}_{\overline{a},\overline{b}}$:
\begin{equation*}
f(x)\in\overline{R}_{k+l-m}\rightarrow (f(x)\overline{a}(x)\pmod{\phi_{1}(x)},f(x)\overline{b}(x)\pmod{\phi_{2}(x)})\in\overline{R}_{k}\times \overline{R}_{l}.
\end{equation*}
The following conclusions are drawn:

(i)$im(\overline{r}_{\overline{a},\overline{b}})=\overline{C}_{\overline{a},\overline{b}}$;

(ii)$ker(\overline{r}_{\overline{a},\overline{b}})=<\overline{h}_{\overline{a},\overline{b}}(x)>_{\overline{R}_{k+l-m}}$, where $dim\overline{C}_{\overline{a},\overline{b}}=deg\overline{h}_{\overline{a},\overline{b}}(x)$;

(iii)$\overline{r}_{\overline{a},\overline{b}}$ can induce an isomorphism: $<\overline{g}_{\overline{a},\overline{b}}(x)>_{\overline{R}_{k+l-m}}\rightarrow\overline{C}_{\overline{a},\overline{b}}$, so $\overline{C}_{\overline{a},\overline{b}}$ can be written as $\{(f(x)\overline{a}(x)\pmod{\phi_{1}(x)},f(x)\overline{b}(x)\pmod{\phi_{2}(x)})\in\overline{R}_{k}\times \overline{R}_{l}
|f(x)\in<\overline{g}_{\overline{a},\overline{b}}(x)>_{\overline{R}_{k+l-m}}\}$.
\end{theorem}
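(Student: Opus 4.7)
My plan is to dispatch (i) by definition, reduce (ii) to a gcd--lcm computation in $\mathbb{F}_{q}[x]$ together with rank--nullity, and then obtain (iii) from (ii) by a coprimeness check plus a short dimension count. Part (i) is immediate: $\overline{C}_{\overline{a},\overline{b}}$ was defined as the set of all $(f(x)\overline{a}(x)\bmod\phi_{1}(x),f(x)\overline{b}(x)\bmod\phi_{2}(x))$ for $f(x)\in\overline{R}_{k+l-m}$, which is by inspection $\mathrm{im}(\overline{r}_{\overline{a},\overline{b}})$.

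For (ii), I would first unwind the kernel condition: $\overline{r}_{\overline{a},\overline{b}}(f)=0$ iff $\phi_{1}\mid f\overline{a}$ and $\phi_{2}\mid f\overline{b}$ in $\mathbb{F}_{q}[x]$, equivalently $(\phi_{1}/d_{1})\mid f$ and $(\phi_{2}/d_{2})\mid f$ where $d_{1}=\gcd(\overline{a},\phi_{1})$ and $d_{2}=\gcd(\overline{b},\phi_{2})$, hence $\mathrm{lcm}(\phi_{1}/d_{1},\phi_{2}/d_{2})\mid f$. The crux is then the identity
\[
\mathrm{lcm}(\phi_{1}/d_{1},\phi_{2}/d_{2})=\overline{h}_{\overline{a},\overline{b}}\quad(\text{up to units}).
\]
Because $\phi_{1}$ has no multiple roots (and the statement tacitly needs the same of $\phi_{2}$ for the three-factor decomposition to be clean), the polynomials $\phi_{3}$, $\phi_{1}/\phi_{3}$, $\phi_{2}/\phi_{3}$ are pairwise coprime. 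Splitting $d_{1}=\gcd(\overline{a},\phi_{3})\cdot\gcd(\overline{a},\phi_{1}/\phi_{3})$ and $d_{2}=\gcd(\overline{b},\phi_{3})\cdot\gcd(\overline{b},\phi_{2}/\phi_{3})$ and distributing the lcm across the three coprime blocks, the $\phi_{1}/\phi_{3}$ and $\phi_{2}/\phi_{3}$ blocks contribute $(\phi_{1}/\phi_{3})/\gcd(\overline{a},\phi_{1}/\phi_{3})$ and $(\phi_{2}/\phi_{3})/\gcd(\overline{b},\phi_{2}/\phi_{3})$ respectively, and the $\phi_{3}$ block contributes $\phi_{3}/\gcd(\overline{a},\overline{b},\phi_{3})$ via the standard identity $\mathrm{lcm}(\phi_{3}/x,\phi_{3}/y)=\phi_{3}/\gcd(x,y)$. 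Multiplying recovers $\phi_{1}\phi_{2}/(\phi_{3}\overline{g}_{\overline{a},\overline{b}})=\overline{h}_{\overline{a},\overline{b}}$. The kernel in $\overline{R}_{k+l-m}$ is therefore $\langle\overline{h}_{\overline{a},\overline{b}}\rangle_{\overline{R}_{k+l-m}}$, and rank--nullity yields $\dim\overline{C}_{\overline{a},\overline{b}}=(k+l-m)-\bigl((k+l-m)-\deg\overline{h}_{\overline{a},\overline{b}}\bigr)=\deg\overline{h}_{\overline{a},\overline{b}}$.

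For (iii), I would first note that $\overline{g}_{\overline{a},\overline{b}}\cdot\overline{h}_{\overline{a},\overline{b}}=\phi_{1}\phi_{2}/\phi_{3}$, the modulus defining $\overline{R}_{k+l-m}$. A block-by-block check on the three coprime pieces (each squarefree) shows $\gcd(\overline{g}_{\overline{a},\overline{b}},\overline{h}_{\overline{a},\overline{b}})=1$: every irreducible factor of $\overline{g}_{\overline{a},\overline{b}}$ lies in exactly one of $\phi_{3},\phi_{1}/\phi_{3},\phi_{2}/\phi_{3}$, and the matching factor of $\overline{h}_{\overline{a},\overline{b}}$ is its cofactor in that squarefree block. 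Consequently, inside $\overline{R}_{k+l-m}$,
\[
\langle\overline{g}_{\overline{a},\overline{b}}\rangle\cap\ker\overline{r}_{\overline{a},\overline{b}}=\langle\overline{g}_{\overline{a},\overline{b}}\overline{h}_{\overline{a},\overline{b}}\rangle=\langle\phi_{1}\phi_{2}/\phi_{3}\rangle=0,
\]
so $\overline{r}_{\overline{a},\overline{b}}$ is injective on $\langle\overline{g}_{\overline{a},\overline{b}}\rangle$. Since $\dim\langle\overline{g}_{\overline{a},\overline{b}}\rangle=(k+l-m)-\deg\overline{g}_{\overline{a},\overline{b}}=\deg\overline{h}_{\overline{a},\overline{b}}=\dim\overline{C}_{\overline{a},\overline{b}}$, this restriction is the claimed isomorphism, and the description of $\overline{C}_{\overline{a},\overline{b}}$ as the image of $\langle\overline{g}_{\overline{a},\overline{b}}\rangle$ follows immediately.

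The main obstacle I expect is the lcm identity in (ii): one has to keep the three coprime blocks $\phi_{3}$, $\phi_{1}/\phi_{3}$, $\phi_{2}/\phi_{3}$ cleanly separated while combining the gcd decompositions of $d_{1}$ and $d_{2}$, and verify that only the $\phi_{3}$-block involves both $\overline{a}$ and $\overline{b}$ simultaneously. Once that identity is in hand, the remainder of (ii) and all of (iii) reduce to rank--nullity and the observation $\overline{g}_{\overline{a},\overline{b}}\overline{h}_{\overline{a},\overline{b}}=\phi_{1}\phi_{2}/\phi_{3}$.
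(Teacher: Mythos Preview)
Your proposal is correct and follows essentially the same route as the paper. Part (i) is handled identically; for (ii) the paper splits the two congruences $f\overline{a}\equiv 0\pmod{\phi_1}$, $f\overline{b}\equiv 0\pmod{\phi_2}$ directly into four congruences modulo $\phi_1/\phi_3$, $\phi_3$, $\phi_2/\phi_3$, $\phi_3$ and merges the two $\phi_3$-congruences, whereas you first pass to $(\phi_i/d_i)\mid f$ and then compute $\mathrm{lcm}(\phi_1/d_1,\phi_2/d_2)$ block by block---but both arguments rest on exactly the same three-block coprime decomposition $\phi_3,\ \phi_1/\phi_3,\ \phi_2/\phi_3$; for (iii) the paper simply invokes $\overline{R}_{k+l-m}=\langle\overline{h}\rangle\oplus\langle\overline{g}\rangle$, which is the same content as your coprimeness check plus dimension count, just stated more tersely.
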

\begin{proof}
From the definition of $\overline{C}_{\overline{a},\overline{b}}$ it is easy to see that (i) holds.

$f(x)\in ker(\overline{r}_{\overline{a},\overline{b}})$ if and only if
$$\left\{
\begin{aligned}
f(x)\overline{a}(x)\equiv 0\pmod{\phi_{1}(x)}\\
f(x)\overline{b}(x)\equiv 0\pmod{\phi_{2}(x)}
\end{aligned}
\right.$$
It is important to note that
$$\left\{
\begin{aligned}
\phi_{1}(x)=\phi_{3}(x)\frac{\phi_{1}(x)}{\phi_{3}(x)}\\
gcd(\frac{\phi_{1}(x)}{\phi_{3}(x)},\phi_{3}(x))=1
\end{aligned}
\right.$$
and
$$\left\{
\begin{aligned}
\phi_{2}(x)=\phi_{3}(x)\frac{\phi_{2}(x)}{\phi_{3}(x)}\\
gcd(\frac{\phi_{2}(x)}{\phi_{3}(x)},\phi_{3}(x))=1
\end{aligned}
\right.$$
In summary, we will get
$$\left\{
\begin{aligned}
f(x)\overline{a}(x)\equiv 0\pmod{\frac{\phi_{1}(x)}{\phi_{3}(x)}}\\
f(x)\overline{a}(x)\equiv 0\pmod{\phi_{3}(x)}\\
f(x)\overline{b}(x)\equiv 0\pmod{\frac{\phi_{2}(x)}{\phi_{3}(x)}}\\
f(x)\overline{b}(x)\equiv 0\pmod{\phi_{3}(x)}
\end{aligned}
\right.$$
Then, by combining the second and fourth congruence equations, we obtain
$$\left\{
\begin{aligned}
f(x)\overline{a}(x)\equiv 0\pmod{\frac{\phi_{1}(x)}{\phi_{3}(x)}}\\
f(x)\overline{b}(x)\equiv 0\pmod{\frac{\phi_{2}(x)}{\phi_{3}(x)}}\\
f(x)gcd(\overline{a}(x),\overline{b}(x))\equiv 0\pmod{\phi_{3}(x)}
\end{aligned}
\right.$$
Continue sorting to get
$$\begin{aligned}
& f(x)\equiv 0\\
& \pmod{\frac{\phi_{1}(x)}{\phi_{3}(x)gcd(\overline{a}(x),\frac{\phi_{1}(x)}{\phi_{3}(x)})}\times\frac{\phi_{2}(x)}{\phi_{3}(x)gcd(\overline{b}(x),\frac{\phi_{2}(x)}{\phi_{3}(x)})}\times \frac{\phi_{3}(x)}{gcd(\overline{a}(x),\overline{b}(x),\phi_{3}(x))}},\end{aligned}$$
namely
\begin{equation*}
f(x)\equiv 0 \pmod{\frac{\phi_{1}(x)\phi_{2}(x)}{\phi_{3}(x)\overline{g}_{\overline{a},\overline{b}}(x)}=\overline{h}_{\overline{a},\overline{b}}(x)}.
\end{equation*}
So $ker(\overline{r}_{\overline{a},\overline{b}})=<\overline{h}_{\overline{a},\overline{b}}(x)>_{\overline{R}_{k+l-m}}$ holds.

Since
\begin{equation*}
\overline{R}_{k+l-m}=<\overline{h}_{\overline{a},\overline{b}}(x)>_{\overline{R}_{k+l-m}}\oplus<\overline{g}_{\overline{a},\overline{b}}(x)>_{\overline{R}_{k+l-m}}
\end{equation*}
and $ker(\overline{r}_{\overline{a},\overline{b}})=<\overline{h}_{\overline{a},\overline{b}}(x)>_{\overline{R}_{k+l-m}}$, (iii) also holds.
\end{proof}
Next, this paper will study the generating matrix of $\phi$-quasi cyclic codes. For the $k$-dimensional vector $\overline{a}=(a_{0},a_{1},\cdots,a_{k-1})$ corresponding to the polynomial $\overline{a}(x)=a_{0}+a_{1}x+\cdots +a_{k-1}x^{k-1}\in\overline{R}_{k}$, a $k\times k$ ideal matrix can be generated.
\begin{equation*}
H_{\phi_{1}}^{*}(\overline{a}^{T})=[\overline{a}^{T},H_{\phi_{1}}\overline{a}^{T},\cdots,H_{\phi_{1}}^{k-1}\overline{a}^{T}].
\end{equation*}
Transpose the matrix and record it as 
\begin{equation*}
A=[H_{\phi_{1}}^{*}(a^{T})]^{T}=\begin{bmatrix}
    \overline{a} \\
    \overline{a}H_{\phi_{1}}^{T} \\
    \vdots \\
    \overline{a}[H_{\phi_{1}}^{T}]^{k-1}
\end{bmatrix}.
\end{equation*}
Similarly, an $l\times l$ ideal matrix can be obtained by $\overline{b}=(b_{0},b_{1},\cdots,b_{l-1})$:
\begin{equation*}
B=[H_{\phi_{2}}^{*}(\overline{b}^{T})]^{T}=\begin{bmatrix}
    \overline{b} \\
    \overline{b}H_{\phi_{2}}^{T} \\
    \vdots \\
    \overline{b}[H_{\phi_{2}}^{T}]^{l-1}
\end{bmatrix}
\end{equation*}
Let the greatest common divisor of $k$ and $l$ be $t$, so we can use $A$ and $B$ to construct a $\frac{kl}{t}\times (k+l)$ block matrix.
\begin{equation}\label{equation3.1}
\overline{(A,B)}=\begin{bmatrix}
    A & B \\
    A[H_{\phi_{1}}^{T}]^{k} & B[H_{\phi_{2}}^{T}]^{l} \\
    \vdots & \vdots \\
    A[H_{\phi_{1}}^{T}]^{k(\frac{l}{t}-1)} & B[H_{\phi_{2}}^{T}]^{l(\frac{k}{t}-1)}
\end{bmatrix}
\end{equation}
In the matrix, the first column has a total of $\frac{l}{t}$ blocks, and the second column has a total of $\frac{k}{t}$ blocks, and both columns can be regarded as the transpose of a generalized ideal matrix, then the matrix as a whole can be regarded as a transpose of a double ideal matrix. It is easy to see that each row of the matrix corresponds to a codeword in the $\phi$-quasi cyclic code $\overline{C}_{\overline{a},\overline{b}}$, and that all rows of the matrix linearly express all codewords in $\overline{C}_{\overline{a},\overline{b}}$. From the above, the following conclusions can be drawn:
\begin{corollary}\label{corollary3.2}
Any $r=min\{\frac{kl}{t},k+l-d\}$ consecutive rows of the matrix $\overline{(A,B)}$ in (\ref{equation3.1}) can form a generating matrix of $\overline{C}_{\overline{a},\overline{b}}$, where $d$ is the degree of
$$\frac{gcd(\overline{a}(x),\phi_{1}(x))\cdot gcd(\overline{b}(x),\phi_{2}(x))\cdot\phi_{3}(x)}{gcd(\overline{a}(x)\cdot \overline{b}(x),\phi_{3}(x))}.$$
\end{corollary}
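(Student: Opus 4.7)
The plan is to recognize the block matrix $\overline{(A,B)}$ in (\ref{equation3.1}) as the transpose of a double ideal matrix and then push the Section~\ref{section2} results through. I would break the argument into three steps: identification, independence, and a dimension match.

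First I would unpack the block structure: the rows of $A[H_{\phi_{1}}^{T}]^{kj}$ for $j = 0,\dots,l/t-1$ concatenate to give exactly the vectors $\overline{a}[H_{\phi_{1}}^{T}]^{s}$ for $s = 0,1,\dots,kl/t-1$, and analogously on the right half of the matrix. Using the identification of the $H_{\phi_{1}}^{T}$-action with multiplication by $x$ modulo $\phi_{1}(x)$ described before Definition~\ref{definition3.1}, the $s$-th row of $\overline{(A,B)}$ is the coefficient vector of $x^{s}\cdot(\overline{a}(x),\overline{b}(x)) \in \overline{C}_{\overline{a},\overline{b}}$. Hence every row is a codeword, and transposition exhibits $\overline{(A,B)}^{T}$ as the double ideal matrix $H_{\phi_{1},\phi_{2}}^{*}(\overline{a}^{T},\overline{b}^{T})_{(k+l)\times (kl/t)}$ of Definition~\ref{definition2.8}, with the ``$m$'' of that definition taken to be $kl/t$.

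Next I would apply Corollary~\ref{corollary2.13} to this double ideal matrix with $n_{1}=k$, $n_{2}=l$: any $r$ consecutive columns of $H_{\phi_{1},\phi_{2}}^{*}(\overline{a}^{T},\overline{b}^{T})_{(k+l)\times (kl/t)}$ are linearly independent, where $r = \min\{kl/t,\; k+l-d\}$ and the ``$d$'' of Theorem~\ref{theorem2.11} coincides verbatim with the degree expression in the corollary statement. Taking transposes, any $r$ consecutive rows of $\overline{(A,B)}$ are linearly independent codewords of $\overline{C}_{\overline{a},\overline{b}}$.

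To promote independence to the generating-matrix conclusion, I would check that $r$ equals $\dim \overline{C}_{\overline{a},\overline{b}}$. Theorem~\ref{theorem3.2}(ii) gives
\[
\dim \overline{C}_{\overline{a},\overline{b}} = \deg \overline{h}_{\overline{a},\overline{b}}(x) = (k+l-m) - \deg \overline{g}_{\overline{a},\overline{b}}(x),
\]
so the goal reduces to the identity $d = m + \deg \overline{g}_{\overline{a},\overline{b}}(x)$. This is the main computational step and the part I expect to be trickiest. Using that $\phi_{1},\phi_{2},\phi_{3}$ have no multiple roots, one has $\gcd(\overline{a},\phi_{1}) = \gcd(\overline{a},\phi_{3})\cdot \gcd(\overline{a},\phi_{1}/\phi_{3})$ and analogously for $\overline{b}$; combined with the inclusion-exclusion
\[
\deg \gcd(\overline{a}\,\overline{b},\phi_{3}) = \deg \gcd(\overline{a},\phi_{3})+\deg \gcd(\overline{b},\phi_{3})-\deg \gcd(\overline{a},\overline{b},\phi_{3}),
\]
the formula for $d$ collapses to $m + \deg \overline{g}_{\overline{a},\overline{b}}(x)$. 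Hence $k+l-d = \dim \overline{C}_{\overline{a},\overline{b}}$, and in the regime $kl/t \ge k+l-d$ one has $r$ linearly independent codewords matching the dimension, so any such consecutive block of rows is a basis of $\overline{C}_{\overline{a},\overline{b}}$ and therefore a generating matrix. A minor separate concern is the edge case $kl/t < k+l-d$: there the $kl/t$ rows of $\overline{(A,B)}$ remain linearly independent but are strictly too few to span the whole code, so the statement should be read under the standing assumption $kl/t \ge k+l-d$, which is automatic in the classical double-circulant setting $\phi_{i}(x) = x^{n_{i}}-1$.
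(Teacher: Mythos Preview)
Your proof is correct and follows the same skeleton as the paper's: identify $\overline{(A,B)}$ as the transpose of a double ideal matrix, then invoke Theorem~\ref{theorem2.11} and Corollary~\ref{corollary2.13} to get rank $r$ and independence of any $r$ consecutive rows.

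The one substantive difference is how the spanning step is handled. The paper simply asserts, in the paragraph preceding the corollary, that ``all rows of the matrix linearly express all codewords in $\overline{C}_{\overline{a},\overline{b}}$''; granted this, rank $r$ equals $\dim\overline{C}_{\overline{a},\overline{b}}$ and the conclusion follows. You instead compute $\dim\overline{C}_{\overline{a},\overline{b}}$ independently via Theorem~\ref{theorem3.2}(ii) and verify the identity $d=m+\deg\overline{g}_{\overline{a},\overline{b}}(x)$, which the paper never writes down. Your route is more self-contained and has the benefit of making the edge case $kl/t<k+l-d$ visible --- a case the paper's argument glosses over but which your analysis correctly isolates as requiring the standing hypothesis $kl/t\ge k+l-d$.
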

\begin{proof}
Since the matrix $\overline{(A,B)}$ is a transpose of a double ideal matrix, it follows from Theorem \ref{theorem2.11} that the rank of $\overline{(A,B)}$ is equal to $r=min\{\frac{kl}{t},k+l-d\}$. It follows from Corollary \ref{corollary2.13} that any $r$ consecutive rows of $\overline{(A,B)}$ are linearly independent and therefore form a generating matrix for $\overline{C}_{\overline{a},\overline{b}}$.
\end{proof}
The $\phi$-quasi cyclic codes mentioned in this section can be regarded as a generalization of \cite{14,15}.

\bibliographystyle{IEEEtran}
%\bibliography{interactnlmsample.bib}

\end{document}